\pgfplotsset{compat=newest}
\newcommand{\avg}[1]{\left< #1 \right>}              
\def\th{\theta}
\def\bth{\overline{\theta}}
\newtheorem{theorem}{Theorem}[section]
\newenvironment{proof}[1][Proof]{\begin{trivlist}
\item[\hskip \labelsep {\bfseries #1}]}{\end{trivlist}}
\newenvironment{remark}[1][Remark]{\begin{trivlist}
\item[\hskip \labelsep {\bfseries #1}]}{\end{trivlist}}
\journal{Physica D}
\begin{document}

\bibliographystyle{unsrt}  

\begin{frontmatter}

\title{Averaging and spectral properties for the 2D advection-diffusion equation in the semi-classical limit for vanishing diffusivity}

\author[GC]{E. Dedits}
\author[GC,CSI]{A. C. Poje}
\author[GC,CSI]{T. Sch{\"a}fer}
\author[GC,CSI]{J. Vukadinovic}

\address[GC]{Physics Program at the CUNY Graduate Center \\
                     365 5th Ave, New York, NY 10016}
\address[CSI]{Department of Mathematics, College of Staten Island \\
                      2800 Victory Blvd, Staten Island, NY 10314}

\begin{abstract}
We consider the two-dimensional advection-diffusion equation on a bounded domain subject to either 
Dirichlet or von Neumann boundary conditions and study both time-independent and time-periodic cases 
involving Liouville integrable Hamiltonians that satisfy conditions conducive to applying the averaging principle.    
Transformation to action-angle coordinates permits averaging in time and angle, leading to an
 underlying eigenvalue equation that allows for separation of the angle and action coordinates. 
The result is a one-dimensional second-order equation involving an anti-symmetric imaginary potential.  
For radial flows on a disk or an annulus, we rigorously  apply existing complex-plane WKBJ methods to study 
the spectral properties in the semi-classical limit for vanishing diffusivity.  
In this limit, the spectrum is found to be a complicated set consisting of lines related to Stokes graphs. 
Eigenvalues in the neighborhood of these graphs exhibit nonlinear scaling with respect to diffusivity  leading to
convection-enhanced rates of dissipation (relaxation, mixing) 
for initial data which are mean-free in the angle coordinate.
These branches coexist with a diffusive branch of eigenvalues that scale linearly with  diffusivity and contain the principal eigenvalue (no dissipation enhancement).   

\end{abstract}

\begin{keyword}
advection-diffusion equation \sep averaging \sep mixing \sep WKBJ method
\end{keyword}

\end{frontmatter}

\section{Introduction}

The focus of this paper is the Cauchy problem for the 2D advection-diffusion equation (ADE), a.k.a. passive scalar equation, which in the `small-diffusion' formulation reads 
\begin{equation}
  \label{eq:initial_ad_eqn}
  c_t -\varepsilon \Delta c+ ({\bf u}\cdot\nabla)c = 0, \ \ c(0,x,y)=c_0(x,y). 
\end{equation}
The unknown $c(t,x,y)$ is a scalar function of spatial coordinates $(x,y)$ and time $t$, while ${\bf u}(t,x,y)$ is a given time-independent or time-periodic vector-field (flow), and $\varepsilon>0$ is a given diffusivity.  We restrict our attention  to divergence-free flows only, i.e., the flows which satisfy $\nabla \cdot {\bf u}(t,\cdot)=0$.  More specifically, we assume the existence of a time-independent or time-periodic stream-function (Hamiltonian), $\Psi(t,x,y)$, such that ${\bf u}(t,x,y) = \nabla\times \Psi(t,x,y)$, with $\nabla\times = (\partial_y, -\partial_x)$  denoting the two-dimensional curl.  

The equation constitutes an important paradigm for a wide range of physical, chemical, and biological processes that are characterized both by transport induced by a fluid flow as well as diffusive forces of different nature.  Examples include homogenization in fluid mixtures, pollutant dispersion in the ocean or atmosphere, temporal evolution of biological systems in flowing media, energy transport in flowing media, etc.  Thus, it is at all not surprising that understanding of the advective dynamics of passive
scalars in the presence of diffusion has been a subject of intensive
research reaching back to at least as far as Batchelor \cite{batchelor}. 

A significant amount of physical and mathematical literature has been devoted to the study of the ADE in various settings: on unbounded domains, compact manifolds, or bounded domains in conjunction with certain boundary conditions.  The diffusion is responsible for a rather simple long-time dynamics characterized by the relaxation toward an 
equilibrium $c_{\rm eq}$,
\[
||c(t,\cdot)- c_{\rm eq}||_\infty\to0\ \ \ {\rm as}\  t\to\infty.
\]
($c_{\rm eq}\equiv0$ in the case of Dirichlet boundary conditions, or $c_{\rm eq}=\avg{c_0}$ in the case of von Neumann or periodic boundary conditions, for example.)
 
A large open area of study remains the rather intricate interplay between diffusion and advection in the semi-classical limit $\varepsilon\to0$. The convection-driven enhancement of dissipation (relaxation, mixing) is of particular interest.  Depending on the boundary problem, and closely related to it, on the structure of the spectrum of the non-self-adjoint advection-diffusion operator ${\mathcal L}_\varepsilon=- \varepsilon \Delta +{\bf u}\cdot\nabla $, various methods (e.g. homogenization, probabilistic, variational and `PDE' methods) have been employed.  Here, we review some of the relevant existing literature.   

On unbounded domains and periodic media, some of these questions have been addressed within the framework of the 
homogenization theory \cite{homogenization_thesis,homogenization1,homogenization2}.  It was shown that the long-time, long-distance behavior of solutions is governed by an effective-diffusion equation, $c_t=\Sigma_{i,j} a_{ij}^\varepsilon \frac{\partial^2c}{\partial x_i\partial x_j}$, where the 
constant matrix $A^\varepsilon=(a_{ij}^\varepsilon)$ is the so-called effective diffusivity tensor.  
The effective diffusion in a given direction ${\bf e}$ is then given by ${\bf e}^TA^\varepsilon{\bf e}$, and convection-enhanced scaling regimes as `good' as  $\varepsilon^{-1}$ and as `bad' as $\varepsilon$ have been identified.  In particular, the typical scaling regime $\varepsilon^{1/2}$ has been observed and referred to as convection-enhanced diffusion (see \cite{homogenization2}).  

Homogenization theory does not, however, provide satisfactory explanation of the short-term enhancement of 
dissipation by convection.  
In a recent paper \cite{zlatos}, Zlato\v s defined dissipation enhancing flows ${\bf u}$ on an unbounded 
domain $D$ by requiring  that for any initial condition $c_0\in L^p(D)$, the  solution $c^{(\varepsilon)}$ of the Cauchy problem (\ref{eq:initial_ad_eqn}) satisfy 
\begin{equation}\label{diss_enhance}
\lim_{\varepsilon\to0} ||c^{(\varepsilon)}(\varepsilon^{-1},\cdot)||_\infty=0.  
\end{equation}
Loosely speaking, Zlato\v s then characterized these flows  by the condition that the only eigenfunctions of the advection operator ${\bf u}\cdot\nabla$ are the first integrals of ${\bf u}$.  
 
For the purposes of this paper, we are primarily interested in the case of a bounded domain 
$D$ with Dirichlet (or von Neumann) boundary conditions. 
Under some mild regularity conditions on ${\bf u}$, the time-independent advection-diffusion operator possesses 
a pure point spectrum consisting of isolated eigenvalues that have positive (nonnegative) real part 
\cite{agmon, faierman}.  
The eigenvalue with the least positive real part is referred to as the principal eigenvalue, $\lambda_0^\varepsilon$. 
This value determines the slowest time scale of relaxation toward the equilibrium in the sense that, 
for generic initial data $c_0\in L^2(D)$ (i.e., the initial data with a non-vanishing projection onto the eigenspace corresponding to the principal eigenvalue), the following holds:  
\[
t^{-1}\log||c^{(\varepsilon)}(t,\cdot)||_{L^2(D)}\to -\lambda_0^\varepsilon\ \ \ {\rm as}\ \ \  t\to\infty. 
\]
Therefore, the dependence of $\lambda_0^\varepsilon$ on $\varepsilon$ essentially determines 
any convection-enhanced dissipation  (relaxation, mixing) rate for generic initial conditions.  
Berestycki et al.  \cite{berestycki} 
identified a sharp criterion for the principal eigenvalue $\lambda_0^\varepsilon$ to scale linearly with $\varepsilon$ as  $\varepsilon\to0$  (no dissipation enhancement).  More precisely, the authors proved that $\varepsilon^{-1}\lambda_0^\varepsilon$ is bounded as $\varepsilon\to0$ if and only if ${\bf u}$ has a first integral $w$ in the space $H_0^1(D)$, i.e., if ${\bf u}\cdot\nabla w=0$.  When this is not the case, the authors proved that we find ourselves in the dissipation enhanced regime in the sense that for any initial datum $c_0\in L^2(D)$, (\ref{diss_enhance}) holds. 

While the principal eigenvalue is the most relevant quantity, the complete picture of the relaxation dynamics 
is given by the  structure of the entire spectrum, potentially a very complicated set since the 
advection-diffusion operator ${\mathcal L}_\varepsilon$ is a sum of a self-adjoint operator, 
$-\varepsilon \Delta$, and an anti-self-adjoint operator, ${\bf u}\cdot\nabla$, and as such, 
possesses neither symmetry.  
The theory of non-self-adjoint (NSA) operators lags far behind the theory of self-adjoint (SA) operators (see \cite{davies:2007}).  Self-adjoint theory has at its disposal a powerful tool in the spectral theorem, as well as a variety of variational methods, that can be used to obtain tight bounds on eigenvalues both theoretically and 
numerically.  The self-adjoint theory and its techniques have been used to great effect in quantum mechanics.  The non-self-adjoint theory, on the other hand, is much less cohesive and still remarkably incomplete.  
It comprises a  wide variety of diverse methods whose only commonality is the use, in one way or another,  
of ideas from analytic function theory.

The complicated structure of the spectrum for ADE has been observed numerically by Giona et al. 
(see \cite{giona1, giona2}) where the most important feature is the occurrence  of different eigenvalue 
branches with possibly different convection-enhanced scaling regimes.  
In particular, for the parallel sine flow ${\bf u}(x,y)=(0,\sin(2\pi x))$ on the unit torus, a `diffusive' branch of eigenvalues which scale linearly with $\varepsilon$ is found to coexist with two (equivalent) `convective' branches of eigenvalues which scale as $\varepsilon^{1/2}$ -- the same scaling also observed by homogenization techniques (see \cite{homogenization2}) and referred to as convection-enhanced diffusion.  
Giona et al. provide a heuristic argument for the presence of the convection-enhanced branch based on the
numerically observed localization of the eigenfunctions. 
The aim of this paper is to provide a rigorous mathematical justification.  

In accordance with our discussion above about NSA operators, general results are extremely difficult to obtain.  
We focus on the case of Liouville integrable Hamiltonians $H(x,y)$ that allow for canonical transformation to 
action-angle coordinates $(J,\theta)$, and consider domains $D$ topologically equivalent to either a disk or an 
annulus whose boundary consists of level sets of $H(x,y)$.  
In this case, ${\bf u}=\nabla\times H$ possesses a first integral in $H_0^1(D)$ and hence the principal eigenvalue scales linearly with $\varepsilon$.  
The advantage of action-angle coordinate formulation is that the advection assumes a very simple form 
$\omega(J)c_\theta$ and, under mild conditions on $\omega(J)$,  solutions converge to solutions of the equation 
obtained by averaging the coefficients of the diffusion operator written in action-angle coordinates  with respect to the angle coordinate $\th$.  The underlying eigenvalue equation allows for separation of action and angle coordinates through the ansatz $c_{m}(J,\th)=e^{im\th}g_{m}(J)$, leading to a countable family of 1D equations in $g_{m}$ involving an imaginary potential $im\omega(J)$ ($i$ is the imaginary unit and $m$ is an integer).  For $m=0$, the equation is a self-adjoint problem leading to eigenvalues which scale linearly with $\varepsilon$.  When $m\not=0$, however, it is a non-self-adjoint convection-dominated problem which is much more difficult and leads to nonlinear scaling with respect to $\varepsilon$.  
In particular, the solutions for any initial data $c_0(J, \theta)$ with zero-mean in the angle coordinate $\th$, i.e., when $\int_0^{2\pi} c_0(J,\th)\ d\th\equiv 0$,  are subject to the dissipation enhanced regime in the sense of  (\ref{diss_enhance}) . Fixing $m=1$, the equation can be thought of as the 1D analogue of the advection-diffusion equation which retains exact features of the ADE not based on approximations. 
This equation also involves a non-self-adjoint operator whose spectral structure is difficult to
characterize in general.

An important example is  the case of unidirectional axisymmetric radial flows in annular regions or disks,  
which include physically realizable flows such as the Couette flow and Poiseuille flow. 
We are able to apply the WKBJ method on the complex plane, as developed in \cite{fedoryuk, shkalikov1, shkalikov2, shkalikov3}.   
The complexity of the spectrum depends on the behavior of the potential near the boundary and its critical points.  Following \cite{shkalikov1, shkalikov2, shkalikov3}, we consider the case when $\omega(r)$ is either monotone or concave, which includes the cases of (regularized) vortical flow, the Couette flow and the Poiseuille flow.  
We show that in the semi-classical limit $\varepsilon\to 0$, the spectrum converges to a collection of lines 
related to the so-called Stokes graphs.  We also infer information about the scaling of eigenvalues on those lines with respect to $\varepsilon$, which is typically either $\varepsilon^{1/2}$ (for quadratic critical points) 
or $\varepsilon^{1/3}$ (for non-critical boundary points).  
Note that the case of MHD driven annular micromixers leading to the Poiseuille profile was examined in \cite{gleeson1, gleeson2}.  Using a particular ansatz, the authors identified one spectral branch with the scaling $\varepsilon^{1/3}$.  As we shall see, scaling regimes $\varepsilon^{1/3}$ and $\varepsilon^{1/2}$ coexist for this particular case.  
 
In addition to treating the autonomous case, we also treat the time-periodic case, assuming, however, that the time dependence of the stream function $\Psi$ is completely separable so that
\begin{equation}
  \label{eq:stream_functio_separated}
  \Psi(t,x,y) = H(x,y)f(t).  
\end{equation}
The case where $f(t)$ is a mean-free periodic function of time with period $T$, 
\begin{equation}
  \label{eq:f(t)_mean_free}
  \avg{f} = \frac{1}{T}\int_0^T f(t) = 0, 
\end{equation}
was treated in \cite{ts_averaging}.  Lie-transformation based averaging techniques were used to show that the solutions converge to solutions of a autonomous self-adjoint diffusive equation. In this paper, we treat the case of non-vanishing mean $\avg{f}$.  We use the averaging principle for  perturbations of Hamiltonian systems (see \cite{freidlin1, freidlin2}) to show that the solutions converge to solutions of an autonomous advection-diffusion equation, so that the relaxation dynamics in this particular non-autonomous case are, up to a small diffusive correction,
 the same as those of the autonomous diffusion equation.

\section{Action-angle coordinates and averaging.}
\label{sec:stre-lines-coord-and-aver}
\subsection{Action-angle coordinates, the autonomous case, and  angle-averaging} 
Let us first consider the autonomous advection-diffusion equation (\ref{eq:initial_ad_eqn}) with ${\bf u}=\nabla\times H(x,y)$ for some time-independent Hamiltonian $H(x,y)$.  Throughout the paper, we assume that the level sets $M_h=\{(x,y):H(x,y)=h\}$ are compact, closed and connected curves, so that the conditions of the Liouville's theorem on integrable systems are satisfied.  It is well known (see \cite{arnold}) that the system then allows for a  canonical transformation to action-angle variables 
\begin{equation}
  \label{eq:canonical_transformation}
  \mathcal{C}: (x,y)\rightarrow(J, \theta), 
\end{equation}
which satisfy the following two conditions: (a) the Hamiltonian is a function of the action coordinate, $H=h(J)$ and (b) $\oint_{M_h}d\theta=2\pi$.  Let us first assume that $D$ is topologically equivalent to an annulus bounded by curves $\{(x,y):J(x,y)=j_1\}$ and $\{(x,y):J(x,y)=j_2\}$ with $j_1<j_2$.  We consider the Dirichlet boundary conditions $c|_{\partial D}=0$ (or the von Neuman boundary conditions $dc/dJ|_{\partial D}=0$). It is easy to see that ${\bf u}$ possesses a first integral belonging to $H_0^1(D)$, and therefore it is not dissipation enhancing in the sense of  (\ref{diss_enhance}) .  

Introducing $\omega(J)=h'(J)$, the original advection-diffusion equation (\ref{eq:initial_ad_eqn}) in action-angle 
coordinates can be written as 
\begin{equation}
  \label{eq:ad_in_AA_coordinates}
  c_t + \omega(J)c_{\theta} = \varepsilon\left(|\nabla_{(x,y)} J|^2c_{JJ}+|\nabla_{(x,y)} \theta|^2c_{\theta\theta}+(\Delta_{(x,y)} J) c_J+(\Delta_{(x,y)}\theta) c_\theta\right).
\end{equation}
Expressing the new  coefficients for the Laplacian in terms of action-angle coordinates,  $a_{11}(J,\theta)=|\nabla_{(x,y)} J|^2$, $a_{22}(J,\theta)=|\nabla_{(x,y)} \theta|^2$, $b_1(J,\theta)=\Delta_{(x,y)} J$ and $b_2(J,\theta)=\Delta_{(x,y)} \theta$, we write the equation as 
\begin{equation}
  \label{eq:ad_in_AA_coordinates}
  c_t + \omega(J)c_{\theta} = \varepsilon(A(J,\theta):\nabla\nabla + {\bf b}(J,\theta)\cdot\nabla)c.
\end{equation}
Here
\begin{eqnarray}
  \label{eq:inner_product_gamma}
  &A(J,\theta):\nabla\nabla = a_{11}(J,\theta)\partial_{JJ}  +a_{22}(J,\theta)\partial_{\theta\theta}, \\
  &{\bf b}(J,\theta)\cdot\nabla = b_1(J,\theta)\partial_{J}+b_2(J,\theta)\partial_\theta.
\end{eqnarray}
Note that the coefficient functions $a_{11}$, $a_{22}$, $b_1$ and $b_2$ are periodic in the angle coordinate $\theta$ with period $2\pi$.  We introduce the following notation for the averages 
\[
\overline a_{ij}(J)=\avg{a_{ij}}=\frac{1}{2\pi}\int_0^{2\pi}a_{ij}(J,\theta)\ d\theta
\]
and 
\[
\overline b_i(J)=\avg{b_{i}}=\frac{1}{2\pi}\int_0^{2\pi}b_{i}(J,\theta)\ d\theta.  
\] 
Note that $\overline b_2=0$.  We now write the averaged equation, 
\begin{equation}
  \label{eq:ad_av_in_AA_coordinates}
  c_t + \omega(J)c_{\theta} = \varepsilon(\overline A(J):\nabla\nabla +\overline{\bf b}(J)\cdot\nabla)c.
\end{equation}
If $d\omega/dJ$ has a finite number of zeros, then the solutions of (\ref{eq:ad_in_AA_coordinates}) converge to the solutions of (\ref{eq:ad_av_in_AA_coordinates}) in the limit $\varepsilon\to0$ (see \cite{freidlin1,freidlin2, freidlin3, freidlin4, koralov}).  We are interested in the underlying eigenvalue problem 
\begin{equation}
  \label{eq:ad_av_in_AA_coordinates_eigenvalue}
- \varepsilon(\overline A(J):\nabla\nabla +\overline{\bf b}(J)\cdot\nabla)c+ \omega(J)c_{\theta} =\lambda c.
\end{equation}
This equation has the advantage that we can seek the eigenfunctions through the ansatz  
\[
c_{m,n}(J,\th)=e^{im\th}g_{m,n}(J), 
\]
where $m$ and $n$ are integers and $g_{m,n}(J)$ satisfies the eigenvalue problem 
\begin{equation}
  \label{eq:ad_av_in_AA_coordinates_eigenvalue_1D}
- \varepsilon\left(\overline a_{11}(J)g''_{m,n} +\overline b_{1}(J)g'_{m,n}-m^2\overline a_{22}(J)g_{m,n}\right)+ im\omega(J)g_{m,n}  =\lambda g_{m,n}, 
\end{equation}
together with the Dirichlet boundary condition $g_{m,n}(j_1)=g_{m,n}(j_2)=0$ (or von Neumann boundary conditions $g'_{m,n}(j_1)=g'_{m,n}(j_2)=0$).   
For $m=0$, this equation reads 
\begin{equation}
  \label{eq:ad_av_in_AA_coordinates_eigenvalue_1D}
- \varepsilon\left(\overline a_{11}(J)g''_{0,n} +\overline b_{1}(J)g'_{0,n}\right) =\lambda g_{0,n},
\end{equation}
which is a self-adjoint problem leading to eigenvalues which scale linearly with $\varepsilon$.  When $m\not=0$, however, it is a non-self-adjoint convection-dominated problem which is much more difficult to analyze and leads to nonlinear scaling with respect to $\varepsilon$.  As already discussed in the introduction, the solutions for any initial data $c_0(J, \theta)$ with zero-mean in the angle coordinate $\th$, i.e., when $\int_0^{2\pi} c_0(J,\th)\ d\th=0$,  are subject to the dissipation enhanced regime in the sense of  (\ref{diss_enhance}) .

The case when $D$ is topologically equivalent to a disk, i.e., when the boundary $\partial D=\{(x,y):J(x,y)=j_2\}$ for some $j_2$ is treated similarly.  Then, there exists $j_1$ (w.l.o.g., we assume that $j_1<j_2$), so that $\{(x,y):J(x,y)=j_1\}$ consists of a point at which $\theta(x,y)$ is undefined.  We then study the problem on a punctured disk $\{(x,y):j_1<J(x,y)<j_2\}$.  The difficulty, however, arises from the fact that the coefficients of the diffusion operator in action-angle coordinates possess a singularity at the punctured point, and one has to infer the boundary condition at $j_1$ from the asymptotic behavior of the solution at the punctured point.  

\subsection{The time-periodic case and  time-averaging.} 
Let $f(t)$ be $T$-periodic for some $T>0$, and let ${\bf u}(t,x,y)=\nabla\times \Psi(t,x,y)$, where 
\[
  \Psi(t,x,y) = H(x,y)f(t).  
\]
This paper is a continuation of authors' work \cite{ts_averaging} for the mean-free case $\avg{f}=\int_0^T f(t)\ dt=0$, and in this section we review some of the results of that paper.   The functon $f(t)=f_0(t) + f_1$ is assumed to be time-periodic with period $T>0$.  We assume that $f_0(t)$ is periodic and mean-free function of time $t$ and $f_1$ is a constant. There are three dynamically distinct cases: (1) the autonomous case ($f_0\equiv0$); (2) the non-autonomous case with vanishing mean ($f_1\equiv0$); and (3) the non-autonomous case with non-vanishing mean ($f_0\not\equiv0$ and $f_1\not\equiv0$). Firstly, we introduce 
\begin{eqnarray}
  \label{eq:func_r}
  F(t) = \int_0^t f_0(t')dt', 
\end{eqnarray}
which is also $T$-periodic.  Again, we transform the equation using action-angle coordinates, 
\[
  \mathcal{C}: (x,y)\rightarrow(J, \theta).
\]
Similarly as before, the original advection-diffusion equation (\ref{eq:initial_ad_eqn}) in these 
coordinates is written as
\begin{equation}
  \label{eq:ad_in_AA_coordinates1}
  c_t + f(t)\omega(J)c_{\theta} = \varepsilon(A:\nabla\nabla + {\bf b}\cdot\nabla)c.
\end{equation}
We now use stream-lines $\overline{J} = J$ and  $\bth = \theta -
\omega(J)F(t)$ as new coordinates via the transformation 
\begin{equation}
  \label{eq:change_of_coordinates_to_streamlines}
  c(t,J,\th) \longrightarrow v(t,\overline{J},\bth)
\end{equation}
with the following transformation rules:
\begin{eqnarray*}
  \label{eq:transformation-rules}
 & c_t = v_t - v_{\bth}\omega f,\;\;\;\;
  c_{\th} = v_{\bth}, \;\;\;\;c_{\th\th} = v_{\bth\bth},\\
 &  c_J = v_J - v_{\bth}\omega' F, \;\;\;\;
  c_{JJ} = v_{JJ} - v_{\bth}\omega'' F - 2v_{J\bth}\omega' F
  + v_{\bth\bth}(\omega' F)^2. 
\end{eqnarray*}
This transformation to stream-lines coordinates is nothing but a
transformation to a new ``co-moving'' reference frame. Denoting by $\tilde A$ and $\tilde{\bf b}$ the (time-dependent) coefficient matrix and vector in these new coordinates, the equation (\ref{eq:ad_in_AA_coordinates1})  becomes
\begin{equation}
  \label{eq:new_equation_in_streamlines_coords1}
  v_{t} + f_1\omega(J)v_{\bth} =(\tilde A:\nabla\nabla + \tilde{\bf b}\cdot\nabla)v. 
\end{equation}
When $f_1=0$,  the advective term in equation (\ref{eq:ad_in_AA_coordinates1})
disappears, and we  obtain an equation for $v$ of the form
\begin{equation}
  \label{eq:new_equation_in_streamlines_coords}
  v_{t} =\varepsilon(\tilde A:\nabla\nabla + \tilde{\bf b}\cdot\nabla)v. 
\end{equation}
All effects of the influence of the advective field are now contained in the time-dependent
coefficients $\tilde{A}$ and $\tilde{\bf b}$ and, therefore, equation 
(\ref{eq:new_equation_in_streamlines_coords}) is now suitable for
averaging.     The main idea of the authors' paper \cite{ts_averaging} was to apply a near-identity Lie transform that
eliminates the explicit time dependence of the coefficients (see
\cite{ts_averaging} for details). However, in the case when $f_1\not=0$, we write the averaged equation in an ad-hoc fashion,   
\begin{equation}
  \label{eq:averaged_first_order1}
  v_{\tau} +f_1\omega(J)v_{\bth}= \varepsilon\Bigl(\avg{\tilde{A}}:\nabla\nabla
  +\avg{\tilde{\bf b}}\cdot\nabla \Bigr) v.
\end{equation}
The justification for convergence to the averaged equation via the Lie-transform approach, which was used for the zero-mean case does not apply to this case.  The convergence to the averaged equation on timescales $\varepsilon^{-1}$, however, can be justified by the averaging principle, given that the Hamiltonian $H(x,y)$ satisfies some mild regularity conditions (see \cite{freidlin1}, Theorem 3.2). 
\subsection{Radial flows.}
\label{sec:vortical_flow}
As already mentioned in the introduction, we are particularly interested in a special case that is very instructive both analytically and numerically, the case of unidirectional axisymmetric radial flows in annular regions or disks,  which include physically realizable flows such as the Couette and Poiseuille flows, and the (regularized) vortical flow.  We  assume that the time dependance is completely separable; in particular let us assume that the stream function is given by
\begin{equation}
  \label{eq:stream_function}
  \Psi(t,x,y) =  H(x,y)f(t) = h(r)f(t), 
\end{equation}
where $r=\sqrt{x^2+y^2}$ is the radial coordinate. For uniaxial radial flows, the action-angle variables can be expressed via the usual polar coordinates, $(x,y) \rightarrow (r^2/2,\theta)$.  Denoting by $\nu(r)=h'(r)$ the azimuthal velocity and by $ \omega(r) = \nu(r)/r$ the ``potential'', the advection-diffusion equation (\ref{eq:initial_ad_eqn}) then reads in polar coordinates
\begin{equation}
c_t + f(t)\omega(r)c_{\theta} -\varepsilon\Delta\,c=0\,,
\end{equation}
where  $\Delta\,c=\left(\frac{1}{r}c_r + c_{rr}+\frac{1}{r^2}c_{\theta\theta}\right)$ is the Laplace operator in polar coordinates.  In the non-autonomus case  $f_0\not\equiv0$, we introduce  
\begin{eqnarray}
  \label{eq:func_r}
  F(t) = \int_0^t f_0(t')dt'\,,
\end{eqnarray}
and we derive the stream-lines equations
\begin{equation}
  \label{eq:streamLines_equations}
  dx/dF  = \omega(r) y, \qquad dy/dF = -\omega(r) x\,.
\end{equation}
We can now use the stream-lines $\overline{r} = r$ and  $\bth = \theta -
\omega(r)F(t)$ as the new coordinates via the transformation 
\begin{equation}
  \label{eq:change_of_coordinates_to_streamlines}
  c(t,r,\th) \longrightarrow v(t,\overline{r},\bth).
\end{equation}
We obtain
(\ref{eq:new_equation_in_streamlines_coords1}) in form
\begin{equation}
  \label{eq:AD_ready_for_averaging}
  v_t - \varepsilon\Bigg(\Delta v + F\Bigg(\Bigl(\frac{\omega'}{r} + \omega''\Bigr)v_{\bth}
  + 2\omega'v_{\bth r}\Bigg) + F^2(\omega')^2v_{\bth\bth}\Bigg)+ f_1\omega v_{\bth} = 0. 
\end{equation}
We again write the averaged
counterpart of (\ref{eq:AD_ready_for_averaging})  by simply replacing the time-dependent coefficients by their time averages as
\begin{equation}
  \label{eq:AD_averaged}
  V_{t} - \varepsilon\Bigg(\Delta V + \avg{F}\Bigg(\Bigl(\frac{\omega'}{r} + \omega''\Bigr)V_{\bth}
  + 2\omega'V_{\bth r}\Bigg) + \avg{F^2}(\omega')^2V_{\bth\bth}\Bigg)+f_1\omega V_{\bth} = 0. 
\end{equation}
If instead of the above transformation we use $\bth = \theta +
\omega(r)(F(t)-\avg{F})$, the averaged equation assumes the simpler form 
\begin{equation}
  \label{eq:AD_averaged1}
  V_{t} - \varepsilon\Bigg(\Delta V+ \avg{(F-\avg{F})^2}(\omega')^2V_{\bth\bth}\Bigg)+f_1\omega V_{\bth} = 0\,.
\end{equation}
Note that this averaged equation is essentially the autonomous advection-diffusion equation up to a `small' diffusive correction in the $\bth$ variable, whose contribution does not change the spectral scaling properties.    

\section{Spectral properties of the autonomous operator} 
\label{sec:sperctral}
In the following section, we study the spectral properties of the autonomous non-selfadjoint operator associated with the equation (\ref{eq:AD_averaged1}), i.e. the problem, 
\[
-\varepsilon\Delta V(r,\th)+\omega(r) V_\th(r,\th)=\lambda V(r,\th)
\]
on a disk $D=\{0\le r \le r^+, 0\le \theta \le 2\pi\}$ or an annulus $D=\{0<r^-\le r \le r^+, 0\le \theta \le 2\pi\}$ 
subject to homogeneous Dirichlet boundary conditions $V|_{\partial D}=0$ or the von Neumann boundary conditions $\frac{\partial V}{\partial r}|_{\partial D}=0$. Here, $\Delta$ stands for the Laplacian in polar coordinates, 
\[
\Delta V=V_{rr}+\frac1r V_r+\frac1{r^2}V_{\th\th}. 
\]
This eigenvalue problem allows for the separation of polar coordinates, and we seek the eigenfunctions in the form  
\[
V(r,\th)=e^{im\th}g_{m,n}(r), 
\]
where $m$ and $n$ are integers and $g_{m,n}$ satisfies the one-dimensional eigenvalue problem 
\begin{equation}
\label{eq:eigenvalue}
L_{m,\varepsilon} g_{m,n}=\lambda_{m,n} g_{m,n},  
\end{equation}
with
\[
(L_{m,\varepsilon} g)(r):=-\varepsilon\left(g''(r)+\frac{1}{r}g'(r)-\frac{m^2}{r^2}g(r)\right)+im\omega(r)g(r). 
\]
In the case of the annulus, the boundary condition becomes  $g_{m,n}(r^{\pm})=0$  (Dirichlet) or $\frac{d}{dr}g_{m,n}(r^{\pm})=0$ (von Neumann). Note that in the case of the disk, the singularity at $r=0$ is regular, leading to the the asymptotic behavior $g_{m,n}(r)\sim {\rm const}\cdot r^m$ as $r\to0$.  Therefore, it is plausible to impose the boundary condition $\lim_{r\to0^+}  \frac{g_{m,n}(r)}{r^m}=1$. 

If $m=0$, (\ref{eq:eigenvalue}) is a self-adjoint problem, 
\[
-\varepsilon(r^2g_{0,n}''(r)+rg_{0,n}'(r))=\lambda_{0,n} r^2g_{0,n}(r).  
\]
Recall that in the case of the disk and the Dirichlet boundary conditions, the eigenvalue--eigenfunction pairs are $\lambda_{0,n}=\varepsilon (j_{0,n}/r^+)^2$, $g_{0,n}(r)=J_0(j_{0,n}r/r^+)$, where $J_0$ is the Bessel function and $\{j_{0,n}\}$ are its positive zeros in the increasing order.  In general, the eigenfunctions are sought-after in the form $g_{0,n}(r)=c_1 H_0^{(1)}(\sqrt{\lambda_{0,n}}r)+c_2H_0^{(2)}(\sqrt{\lambda_{0,n}}r)$, where $H_m^{(1)}$ and $H_m^{(2)}$ are Hankel functions of order $m$.  In either case, the eigenvalues scale according to $\lambda_{0,n}\sim {\rm const}\cdot\varepsilon$.  When $m\not=0$, the problem is no longer self-adjoint, but rather it involves a sum of a self-adjoint and a anti-self-adjoint operator.  


\subsection{WKBJ approximations and the semi-classical limit}
In order to investigate the eigenvalue problem in the semi-classical limit $\varepsilon\to0$, we need to rewrite (\ref{eq:eigenvalue}) in a form suitable for applying the so-called WKBJ method.  To this end, we introduce the change of variables $s=r^{-2m}$.  Letting $h(s)=\frac{g(r)}{r^m}$, one can  verify that
\[
g''(r)+\frac{1}{r}g'(r)-\frac{m^2}{r^2}g(r)=4m^2r^{-(3m+2)}\frac{d^2h(s)}{ds^2}
\]
Therefore, $\lambda_{m,n}$ and $g_{m,n}$ satisfy (\ref{eq:eigenvalue}) if and only if $\mu_{m,n}=-\frac{i}m\lambda_{m,n}$ and $h_{m,n}(s)=\frac{g_{m,n}(r)}{r^m}$ satisfy the equation 
\begin{equation}
  \label{eq:eigenvalue1}
  4mi\varepsilon s^{\frac{2m+1}{m}} \frac{d^2h_{m,n}}{ds^2}=\left(\mu_{m,n}-\tilde\omega(s)\right) h_{m,n},  
\end{equation}
where we introduced the notation $\tilde\omega(s)=\omega(r)$. 
The  annulus problem is now posed on the interval $[s^{-}=(r^+)^{-2m},s^+=(r^-)^{-2m}]$, and the boundary conditions become either $h_{m,n}(s^\pm)=0$ (Dirichlet) or $\frac{d}{ds}(s^{-1/2}h_{m,n}(s))\big|_{s=s^\pm}=0$ (von Neumann). For the disk problem, the equation is now posed on $[s^-,s^+=+\infty)$, and the asymptotic behavior at the singularity leads to the boundary condition $h_{m,n}(+\infty)=1$.  

We define  
\[
(\tilde L_{m,\varepsilon}h)(s):=4mi\varepsilon s^{\frac{2m+1}{m}}\frac{d^2h(s)}{ds^2}+\tilde\omega(s)h(s), 
\]
and let 
\[
\avg{h_1,h_2}:=\int_{s^-}^{s^+}s^{-\frac{2m+1}{m}}h_1(s)h_2(s)\ ds
\]
Let us first assume that $\omega$ is a strictly monotonic (w.l.o.g., increasing) function and let $[a^-,a^+]=\omega([r^-,r^+])$ be its range.  Let 
$\Pi=\{\mu\ \big|\ \Im(\mu)<0,\ \Re(\mu)\in(a^-,a^+)\}$ be a semi-strip in the complex plane. It is obvious that 
$
\{\langle\tilde L_{m,\varepsilon}h,h\rangle|\avg{h,h}=1\}\subset \Pi,
$
and hence the eigenvalues of the problem lie in the semi-strip $\Pi$, as well.  

Following \cite{shkalikov2}, we make additional assumptions on $\omega$: 
\begin{enumerate}
\item Let there be a domain $G\subset\mathbb{C}$
such that $\omega$ is analytic in $G$ and maps $\overline G$ bijectively onto $\overline\Pi$. W.l.o.g., we can assume that $G$ lies entirely below the real axis.  
\item
For any $c\in (a^-,a^+)$, the preimage under $\omega$ of the ray $\{\mu=c-it\big|t\geq0\}$ is a function with respect to the imaginary axis. 
\item $\omega$ is analytic on some $\sigma$-neighborhood $U_\sigma$ of the segment $[r^-,r^+]$. 
\end{enumerate}
Let $\tilde G$ be such that the function $\zeta\mapsto\zeta^{-1/(2m)}$ is a bijection between $\tilde G$ and $G$, and in the following, let us fix that branch. Note that $G$ and $\tilde G$ can be chosen so that the conditions 1. through 3. hold for both $\omega:G\to\mathbb{C}$ and $\tilde\omega:\tilde G\to\mathbb{C}$.  

We now define functions, which are used to construct the WKBJ approximations of solutions to (\ref{eq:eigenvalue1}) (see for example \cite{fedoryuk}).  For $\mu\in\overline\Pi$, let $r_\mu$ denote the turning point of $\omega(r)-\mu$, i.e., let it be (the only) root of the equation $\omega(r)-\mu=0$. We define  
\[
S(r,\mu)=\int_{r_\mu}^{r}\sqrt{i\left(\omega(\xi)-\mu\right)}\ d\xi.  
\]  
For a fixed $\mu$, $S(r,\mu)$ is a multi-valued function.   
It is analytic on $\Pi$ and continuous on $\overline \Pi$ with respect to the variable $\mu$ and  locally analytic with respect to $r\in G$ with the branch point $r_\mu$. For a fixed $\mu\in\Pi$, we define the Stokes lines outgoing from $r_\mu$ as the analytic curves of the level set $\Re S(r,\mu)=0$ initiating at the turning point $r_\mu$.  It can be shown (see \cite{shkalikov2}) that in our particular case there are three Stokes lines initiating out of the turning point $r_\mu$ -- the `left', $\ell_{\rm left}$, the `right', $\ell_{\rm right}$, and the `lower', $\ell_{\rm lower}$. The maximal connected component $\mathcal{C}_\mu=\ell_{\rm left}\cup\ell_{\rm right}\cup\ell_{\rm lower}$ of the level set $\Re S(r,\mu)=0$ that includes the point $r_\mu$ is referred to as the Stokes complex, while the entire level set is referred to as the Stokes graph.  For a fixed $\mu$, we say that a domain $\Omega_\mu$ is canonical if the function $r\to S(r,\mu)$ is univalent on $\Omega_\mu$.  It follows easily that domains that contain points from one Stokes line only are canonical. In this particular case, we can identify three maximal canonical domains,  each of which has one of the Stokes lines from the Stokes complex belonging to it, while the other two as on its boundary.  We denote each one of these domains by $\Omega_\mu^{\rm left}$,  $\Omega_\mu^{\rm right}$ or $\Omega_\mu^{\rm lower}$, depending on which one of the three Stokes lines belongs to it. The branch of the function $S(r,\mu)$ for which $\Im S(r,\mu)\geq0$ on the Stokes line belonging to the canonical domain is also said to be the canonical branch for that canonical domain. However, for practical purposes, we will deviate from this convention.  Note that $S(r,\mu)$ can be extended analytically on either but not simultaneously on both sides of the Stokes lines on its boundary.  

In a similar fashion, on $\overline\Pi\times \tilde G$, we define 
\[
\tilde S(s,\mu)=\frac{1}{2m}\int_{s_\mu}^s\sqrt{i\left(\tilde\omega(\zeta)-\mu\right)}\zeta^{-\frac{2m+1}{2m}}\ d\zeta,
\]  
where $\tilde\omega(s_\mu)=\mu$.  Note that with $s=r^{-2m}$,  $\tilde S(s,\mu)=S(r, \mu)$. From the general WKBJ theory applied to the equation (\ref{eq:eigenvalue1}) easily follows that it possesses two so-called WKBJ approximations of the form 
\[
h_{\rm{app}}^{\pm}(s,\mu)=\frac{s^{\frac{2m+1}{4m}}}{\sqrt[4]{{i}\left(\tilde\omega(s)-\mu\right)}}e^{\pm (m\varepsilon)^{-\frac12}\tilde S(s,\mu)}, 
\]
which lead to the WKBJ approximations of (\ref{eq:eigenvalue})
\begin{equation}\label{approximate_solutions}
g_{\rm{app}}^{\pm}(r,\mu)=\frac{r^{-1/2}}{\sqrt[4]{{i}\left(\omega(r)-\mu\right)}}e^{\pm (m\varepsilon)^{-\frac12}S(r,\mu)}.  
\end{equation}
In the following we assume that $m\ll\frac1\varepsilon$.  We will use the Birkhoff notation $[1]^\pm=1+O^{\pm}(\varepsilon^{\frac12})$. 
\begin{theorem}
Given $\mu\in\Pi$, Eq. (\ref{eq:eigenvalue}) possesses two linearly independent solutions of the form 
\begin{equation}\label{fundamental_solutions}
g^{\pm}(r,\mu)^\pm=g_{\rm{app}}^{\pm}(r,\mu)(1+O^{\pm}(\varepsilon^{\frac12})), 
\end{equation}
where $O^\pm$ satisfies $|O^{\pm}(\varepsilon^{\frac12})|\leq C\varepsilon^{\frac12}$, with a constant $C$ not depending on $r$ as it varies on a compact set $K$ belonging to a canonical domain $\Omega_\mu$.  Moreover, the constant $C$ does not depend on $\mu$ or $r$ as they vary on compact sets $K'\subset\overline\Pi$ and $K\subset\cap_{\mu\in K'}\Omega_\mu$, respectively.  
\end{theorem}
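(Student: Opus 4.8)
The argument follows the classical scheme of complex WKBJ (Liouville--Green) analysis developed in \cite{fedoryuk,shkalikov1,shkalikov2,shkalikov3}: one checks that the explicit functions $g_{\rm app}^{\pm}$ already solve (\ref{eq:eigenvalue}) up to a perturbation that is small relative to them, recasts the search for genuine solutions as a Volterra integral equation for the relative error, and solves it by successive approximation; the whole purpose of the notion of a canonical domain $\Omega_\mu$ is to guarantee that the exponential kernel of that equation remains bounded. Concretely, with $\lambda_{m,n}=im\mu$, equation (\ref{eq:eigenvalue}) becomes $\varepsilon\bigl(g''+\frac1r g'-\frac{m^2}{r^2}g\bigr)=im\bigl(\omega(r)-\mu\bigr)g$, and the Liouville substitution $g=r^{-1/2}\tilde g$ removes the first-order term, leaving $\varepsilon\tilde g''=\bigl[\,im(\omega(r)-\mu)+\varepsilon q_0(r)\,\bigr]\tilde g$ with $q_0(r)=(m^2-\frac14)r^{-2}$ (equivalently one may work with the $s$-form (\ref{eq:eigenvalue1}), which is already of Liouville--Green type with phase $\tilde S$). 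Differentiating the defining formula (\ref{approximate_solutions}) shows that $g_{\rm app}^{\pm}$ satisfy this equation exactly apart from a residual which, on any compact $K\subset\Omega_\mu$, is $O(\varepsilon^{1/2})$ relative to $g_{\rm app}^{\pm}$; this uses that $K$ avoids the turning point $r_\mu$ (where $\omega-\mu$ vanishes), lies inside the analyticity domain $G$, and that $m\ll\varepsilon^{-1}$.

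Now seek $g^{+}(r,\mu)=g_{\rm app}^{+}(r,\mu)\bigl(1+\phi(r,\mu)\bigr)$. Using $\{g_{\rm app}^{+},g_{\rm app}^{-}\}$ as a comparison pair (their Wronskian is an explicit function of $r$, nonvanishing on $K$) and the variation-of-parameters formula, the equation for $\phi$ becomes
\[
  \phi(r,\mu)=\int_{\gamma}\mathcal{K}(r,\zeta,\mu)\,\bigl(1+\phi(\zeta,\mu)\bigr)\,d\zeta ,
\]
where $\gamma=\gamma(r_0,r)$ is a path in $\Omega_\mu$ from a base point $r_0\in\partial\Omega_\mu$ to $r$, and $\mathcal{K}(r,\zeta,\mu)$ is $O(\varepsilon^{1/2})$ times a factor bounded on $K$ times $\exp\!\bigl(\pm 2(m\varepsilon)^{-1/2}(S(r,\mu)-S(\zeta,\mu))\bigr)$. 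The exponential is the only possible source of growth; but since $\Omega_\mu$ is canonical the map $\zeta\mapsto S(\zeta,\mu)$ is univalent there, so $\gamma$ can be chosen \emph{progressive}, i.e.\ with $\Re S(\cdot,\mu)$ monotone along it, which forces the modulus of the exponential to be $\le1$. Hence $|\mathcal{K}(r,\zeta,\mu)|\le C\varepsilon^{1/2}$ uniformly for $\zeta\in\gamma$, $r\in K$.

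With the kernel bounded by $C\varepsilon^{1/2}$ and $\gamma$ of bounded length, the Neumann series generated by Picard iteration converges for $\varepsilon$ small enough and yields $|\phi(r,\mu)|\le C'\varepsilon^{1/2}$; differentiating the integral equation gives the analogous estimate for the derivative, so that $g^{+}=g_{\rm app}^{+}(1+O^{+}(\varepsilon^{1/2}))$ in the stated sense. The second solution $g^{-}$ is obtained by the mirror construction (phase $-S$, base point on the same Stokes line with the opposite orientation), and linear independence follows from $W(g^{+},g^{-})=W(g_{\rm app}^{+},g_{\rm app}^{-})(1+O(\varepsilon^{1/2}))\neq0$. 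For the uniformity in $\mu$: the turning point $r_\mu$, the branches of $S(\cdot,\mu)$, the Stokes complex $\mathcal{C}_\mu$ and the canonical domains $\Omega_\mu^{\rm left},\Omega_\mu^{\rm right},\Omega_\mu^{\rm lower}$ all depend continuously (analytically where it is meaningful) on $\mu\in\overline\Pi$ because of assumptions 1--3 on $\omega$; hence for a compact $K'\subset\overline\Pi$ and $K\subset\bigcap_{\mu\in K'}\Omega_\mu$ the base points and progressive paths can be chosen to depend continuously on $\mu$, and the constants $C,C'$ are then uniform over $K'\times K$.

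The delicate point is the path construction in the second step: for each $\mu$ in the compact set $K'$ one must exhibit a progressive path in the canonical domain joining the base point to an arbitrary point of $K$ while staying uniformly away from $r_\mu$ (which sits on $\partial\Omega_\mu$) and inside $G$, and this choice must vary continuously in $\mu$. This is exactly where the structural hypotheses on $\omega$ --- analyticity, the bijection $\omega:\overline G\to\overline\Pi$, and the resulting three-Stokes-line geometry of the level set $\Re S(\cdot,\mu)=0$ --- are genuinely needed; once that geometry is controlled, the remaining estimates are the routine successive-approximation argument.
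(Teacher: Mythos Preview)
Your proposal is a correct and faithful sketch of the classical complex WKBJ argument (Liouville substitution, Volterra integral equation for the relative error, progressive-path control of the exponential kernel via univalence of $S$ on a canonical domain, Neumann series, and continuous dependence on $\mu$ for uniformity). The paper itself offers no proof of this theorem: its entire argument is the single line ``See \cite{shkalikov2}'', deferring wholesale to Shkalikov's work. What you have written is precisely the argument one finds in \cite{fedoryuk} and \cite{shkalikov2}, so there is no meaningful divergence to report---you have simply supplied the content the paper chose to omit.
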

\begin{proof} See \cite{shkalikov2}. \qed
\end{proof}

We now introduce the functions 
\[
Q^\pm(\mu)= \pm\int_{r_\mu}^{r^\pm}\sqrt{i\left(\omega(\xi)-\mu\right)}\ d\xi\ \  {\rm and}\ \   Q^\infty(\mu)=Q^+(\mu)+Q^-(\mu).
\]
 We fix the branches by the condition that for $c\in (a^-,a^+)$, $Q^+(c)=e^{i\pi/4}\alpha_c^+$ with $\alpha_c^+>0$ and $Q^-(c)=e^{i\pi/4}\alpha_c^-$ with $\alpha_c^-<0$. We define the sets 
\[
\tilde\gamma_\pm = \{\mu\in\overline\Pi\ |\ \Re Q^\pm(\mu)=0\} \ \ \rm{and} \ \  \tilde\gamma_\infty = \{\mu\in\overline\Pi\ |\ \Re Q(\mu)=0\}. 
\]  
Note that the definition would suggest that $Q^\infty(\mu)=S(r^+,\mu)-S(r^-,\mu)$, however this is only true if the segment $[r^-,r^+]$ is contained within a canonical domain.  
In the next theorem, we summarize some important properties of these functions.  
\begin{theorem}\label{gamma}
The curves $\tilde\gamma_-$ and $\tilde\gamma_+$ pass through the points $a$ and $b$, respectively.  Both curves are one-to-one with respect to the interval $[a^-,a^+]$, while the curve $\tilde \gamma_{\infty}$ is one-to-one with respect to $[0,-i\infty)$ on the imaginary axis.  The functions $Q^+(\mu)$,  $Q^-(\mu)$ and $Q^\infty(\mu)$ are univalent in the semi-strip $\Pi$, and, consequently, $\Im Q^+(\mu)$,  $\Im Q^-(\mu)$ and $\Im Q^\infty(\mu)$ are strictly monotonic on $\tilde\gamma_+$, $\tilde\gamma_-$ and $\tilde\gamma_\infty$, respectively.   The function $\Re Q^+(\mu)$ ($\Re Q^-(\mu)$) is positive (negative) above the curve $\tilde\gamma_+$  ($\tilde\gamma_-$), and it is of the opposite sign bellow that curve.  

The three curves have a unique intersection point (knot) $\mu_0$.  We denote by $\gamma_+$, $\gamma_-$ and $\gamma_\infty$ the parts of  $\tilde\gamma_-$, $\tilde\gamma_+$ and $\tilde\gamma_\infty$ between the knot $\mu_0$ and the points $a$, $b$ and $-i\infty$ respectively.  Let $\tilde\Gamma=\tilde\gamma_-\cup\tilde\gamma_+\cup\tilde\gamma_\infty$ and  $\Gamma=\gamma_-\cup\gamma_+\cup\gamma_\infty$. For $k\in\mathbb{Z}$, let $\mu_k^+$, $\mu_k^-$ and $\mu_k^\infty$ denote the solutions of $\Im Q^+(\mu)=(m\varepsilon)^{\frac12}(k\pi-\pi/4)$, $\Im Q^-(\mu)=-(m\varepsilon)^{\frac12}(k\pi-\pi/4)$ and $\Im Q^\infty(\mu)=(m\varepsilon)^{\frac12}k\pi$. For these three equations, let $p_\pm$, $m_\pm$ and $s_0$ be the indices so that $\{\mu_k^+\}_{p_+}^{q_+}$, $\{\mu_k^-\}_{p_-}^{q_-}$, and $\{\mu_k^{\infty}\}_{p_\infty}^{\infty}$ are all the solutions belonging to $\gamma_+$, $\gamma_-$ and $\gamma_\infty$, respectively.  We shall abuse the notation somewhat and assume that $0\leq p_\pm\leq q_\pm$, so that $\Im \mu_{p_\pm}$ are maximal.  
\end{theorem}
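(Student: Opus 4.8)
The plan is to derive every claim of the theorem from a single analytic fact --- that each of $Q^+$, $Q^-$ and $Q^\infty$ is univalent on the semi-strip $\Pi$ --- and to dispatch the rest by elementary considerations (in the spirit of \cite{shkalikov2}). We begin with the statements that cost nothing. Since $\omega(r^\pm)=a^\pm$, the turning point of $\omega-a^+$ is $r^+$ and that of $\omega-a^-$ is $r^-$, so $Q^+(a^+)=0$ and $Q^-(a^-)=0$; hence $a^+\in\tilde\gamma_+$ and $a^-\in\tilde\gamma_-$, which is the first assertion with $a=a^-$, $b=a^+$. For real $c\in(a^-,a^+)$ we have $r_c=\omega^{-1}(c)\in(r^-,r^+)$ and, in the fixed branches, $Q^+(c)=e^{i\pi/4}\alpha_c^+$ with $\alpha_c^+=\int_{r_c}^{r^+}\sqrt{\omega(\xi)-c}\,d\xi$ strictly decreasing in $c$, and $Q^-(c)=e^{i\pi/4}\alpha_c^-$ with $\alpha_c^-=-\int_{r^-}^{r_c}\sqrt{c-\omega(\xi)}\,d\xi$ strictly decreasing in $c$; consequently $\Re Q^\infty(c)$ is a positive constant times $\alpha_c^++\alpha_c^-$, which decreases strictly from a positive value at $a^-$ to a negative one at $a^+$ and so vanishes at a unique interior point $c_*$, the point where $\tilde\gamma_\infty$ meets the real axis.

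For the univalence I would change variables to $\nu=\omega(\xi)$, which carries $[r^-,r^+]$ to $[a^-,a^+]$, moves the turning point to $\nu=\mu$, and gives $Q^\pm(\mu)=\pm\int_{[\mu,a^\pm]}\sqrt{i(\nu-\mu)}\,\Phi(\nu)\,d\nu$, where $\Phi=1/(\omega'\circ\omega^{-1})$ is analytic on $\overline\Pi$ by assumptions 1 and 3 and positive on $(a^-,a^+)$, the contours being straight segments inside the convex set $\overline\Pi$. Differentiating under the integral (the turning-point boundary term vanishes because the integrand does) gives $(Q^+)'(\mu)=-(i/2)\int_{[\mu,a^+]}(i(\nu-\mu))^{-1/2}\Phi(\nu)\,d\nu$, and along $[\mu,a^+]$ the quantity $\arg(\nu-\mu)$ equals the constant $\arg(a^+-\mu)\in(0,\pi/2)$ for $\mu\in\Pi$; thus $\arg(Q^+)'$ is controlled by $\arg\Phi$, and assumptions 1--2 --- which say exactly that $\omega$ maps $\overline G$ bijectively onto $\overline\Pi$ with the preimages of the vertical rays below $(a^-,a^+)$ being graphs over the imaginary axis, i.e.\ that $\omega'$ stays off the imaginary axis --- are what keeps $(Q^+)'$ inside a fixed open half-plane through the origin. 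One then concludes, either by the Noshiro--Warschawski test on the convex domain $\Pi$ or, more robustly, by computing the boundary correspondence and invoking the argument principle, that $Q^+$ is univalent on $\Pi$ with image a Jordan domain. The boundary correspondence --- $[a^-,a^+]$ traversed monotonically onto a segment of the ray $\arg w=\pi/4$ terminating at the origin, the lateral sides $\{\Re\mu=a^\pm\}$ carried to curves whose shape is pinned down by assumption 2, and the corner $\mu\to-i\infty$ sent to a definite ideal point --- is the heart of the matter, and I expect the behaviour on the lateral sides and at $-i\infty$ to be the main obstacle, being the only place assumptions 1--3 are genuinely used. The argument is identical for $Q^-$ (the branch convention merely reflects the picture), and for $Q^\infty=Q^++Q^-$ one has $(Q^\infty)'=(Q^+)'+(Q^-)'$, whose argument is likewise confined to a half-plane, so the same conclusion applies.

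Granting univalence, the remaining assertions follow. For fixed $c\in(a^-,a^+)$ the Cauchy--Riemann relations give $\partial_t\Re Q^+(c-it)=\Im(Q^+)'(c-it)$, which has a fixed sign by the argument control on $(Q^+)'$; since this quantity is negative, $\Re Q^+(c-it)$ decreases strictly in $t$ from the positive value $\Re Q^+(c)$ and, by the asymptotics of $Q^+$ at $-i\infty$, eventually becomes negative, so it has a single zero, whence $\tilde\gamma_+$ meets each vertical line of the strip in one point and is the graph of a function of $\Re\mu\in[a^-,a^+]$, with $\Re Q^+>0$ above it and $\Re Q^+<0$ below; $\tilde\gamma_-$ is treated identically, and the analogous monotonicity of $c\mapsto\Re Q^\infty(c-it)$ shows $\tilde\gamma_\infty$ is a graph over $[0,-i\infty)$. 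As for the imaginary parts: on $\tilde\gamma_+$ the analytic function $Q^+$ is confined to the imaginary axis and is injective (univalence), so $\mu\mapsto\Im Q^+(\mu)$ is injective, hence strictly monotone on this arc; likewise $\Im Q^-$ on $\tilde\gamma_-$ and $\Im Q^\infty$ on $\tilde\gamma_\infty$. Finally, any $\mu$ with $\Re Q^+(\mu)=\Re Q^-(\mu)=0$ has $\Re Q^\infty(\mu)=0$ automatically, so every pairwise intersection of the three curves is a triple one; that such an intersection exists is seen from the graphs $\mu=c+i\psi_\pm(c)$ over $[a^-,a^+]$, for which $\psi_+(a^-)<0=\psi_-(a^-)$ and $\psi_+(a^+)=0>\psi_-(a^+)$ force $\psi_+-\psi_-$ to change sign, and uniqueness follows from a degree count: the boundary correspondence shows that $\mu\mapsto(\Re Q^+(\mu),\Re Q^-(\mu))$ winds exactly once around the origin of $\mathbb{R}^2$ as $\mu$ runs over $\partial\Pi$, so there is a single common zero $\mu_0$ in $\Pi$. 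Splitting each $\tilde\gamma$ at $\mu_0$ then defines $\gamma_+,\gamma_-,\gamma_\infty$ and the tripod $\Gamma$, and the indexing of the $\mu_k^\pm,\mu_k^\infty$ along these arcs is well posed precisely because $\Im Q^+,\Im Q^-,\Im Q^\infty$ are strictly monotone there, so each listed equation has at most one root per value of $k$ on each arc, the roots accumulating only at the arc endpoints $a$, $b$, $-i\infty$.
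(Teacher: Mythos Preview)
The paper does not actually prove this theorem: it is stated without proof as a summary of properties of the functions $Q^\pm$ and $Q^\infty$, with the implicit understanding that the arguments are to be found in the cited work of Shkalikov \cite{shkalikov2}. So there is no ``paper's own proof'' to compare against; your proposal already goes well beyond what the paper supplies.

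As for the content of your sketch, the overall architecture is sound and is indeed the one used in \cite{shkalikov2}: establish univalence of $Q^\pm$ and $Q^\infty$ on $\Pi$, then read off the graph structure of the zero sets of the real parts and the monotonicity of the imaginary parts from the Cauchy--Riemann relations and injectivity. The elementary endpoint computations and the observation that any common zero of $\Re Q^+$ and $\Re Q^-$ is automatically a zero of $\Re Q^\infty$ are correct. The genuine soft spot, which you yourself flag, is the univalence step. Your claim that assumptions 1--2 force $\omega'$ ``off the imaginary axis'' and hence confine $\arg\Phi$ to a half-plane is not quite what those assumptions say, and the Noshiro--Warschawski test needs $\Re\bigl(e^{i\alpha}(Q^+)'\bigr)>0$ uniformly on $\Pi$, which you have not verified; the integral representation for $(Q^+)'$ has $\nu$ ranging over a segment in $\overline\Pi$, and controlling $\arg\Phi(\nu)$ there requires more than $\Phi>0$ on $(a^-,a^+)$. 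The boundary-correspondence/argument-principle route you mention as the ``more robust'' alternative is in fact the one Shkalikov uses, and carrying it out requires tracking $Q^\pm$ along the lateral sides $\Re\mu=a^\pm$ and at $-i\infty$, which is precisely where assumptions 1--3 enter nontrivially. Your degree argument for uniqueness of the knot is plausible but also only sketched. In short: the strategy is right and matches the literature, but the univalence proof as written is a plan rather than a proof.
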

The set $\Gamma$ is the limit spectral graph of Eq. (\ref{eq:eigenvalue}) with the Dirichlet boundary conditions, in the sense of the following theorem.  
\begin{theorem}
Given $\delta>0$ there exists $\varepsilon_0>0$ such that all the Dirichlet eigenvalues of Eq. (\ref{eq:eigenvalue}) lie in the $\delta$-neighborhood $\Gamma_\delta$ of $\Gamma$ provided that $0<\varepsilon<\varepsilon_0$.  
\end{theorem}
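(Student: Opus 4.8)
The plan is to convert the Dirichlet eigenvalue problem for (\ref{eq:eigenvalue}) into a single scalar secular equation expressed through the WKBJ fundamental solutions $g^{\pm}(r,\mu)$ of (\ref{fundamental_solutions}), and then to show that this equation has no solution, for $\varepsilon$ small enough, once $\mu$ stays a fixed distance away from $\Gamma$. Any eigenfunction must be a combination $g=c_{+}g^{+}(\cdot,\mu)+c_{-}g^{-}(\cdot,\mu)$ with $g^{\pm}$ taken in a canonical domain through $r^{-}$, and the conditions $g(r^{-})=g(r^{+})=0$ force
\[
D(\mu,\varepsilon):=g^{+}(r^{+},\mu)\,g^{-}(r^{-},\mu)-g^{+}(r^{-},\mu)\,g^{-}(r^{+},\mu)=0.
\]
Since $r^{-}$ and $r^{+}$ generally do not lie in a common canonical domain, one first continues $g^{\pm}$ from a neighbourhood of $r^{-}$ to a neighbourhood of $r^{+}$ across whichever Stokes lines of the Stokes complex $\mathcal{C}_\mu$ separate them; this is the classical Stokes-phenomenon bookkeeping, governed by triangular connection matrices (recessive solution $\mapsto$ recessive, dominant $\mapsto$ dominant plus a scalar multiple of the recessive). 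After this continuation and after substituting the asymptotics of the fundamental-solution theorem, $D(\mu,\varepsilon)$ becomes, up to a non-vanishing slowly varying prefactor, a finite sum of terms $c_j(\mu)\,e^{\pm(m\varepsilon)^{-1/2}\Phi_j(\mu)}$ with $c_j=[1]^{\pm}$, the exponents $\Phi_j$ being among $S(r^{\pm},\mu)$, $Q^{+}(\mu)$, $Q^{-}(\mu)$ and $Q^{\infty}(\mu)$, the precise list depending on which of the finitely many positions of $[r^{-},r^{+}]$ relative to $\mathcal{C}_\mu$ is realized (these positions change exactly as $\mu$ crosses $\tilde\gamma_{+}$ or $\tilde\gamma_{-}$, or passes the knot $\mu_{0}$).

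In each configuration, the zero set of $D$ is located by a dominant-balance argument: two of the exponents $\Phi_j$ compete for the largest real part, and $D=0$ forces those two to have essentially equal real parts while their imaginary parts differ by the appropriate integer multiple of $\pi$. By Theorem \ref{gamma} the equal-real-part loci are precisely the curves $\tilde\gamma_{+}$ (where $\Re Q^{+}=0$), $\tilde\gamma_{-}$ (where $\Re Q^{-}=0$) and $\tilde\gamma_{\infty}$ (where $\Re Q^{\infty}=0$); the subarcs of these that are actually realized in the corresponding configuration are exactly $\gamma_{+}$, $\gamma_{-}$ and $\gamma_{\infty}$, and the integer-multiple conditions reproduce the quantization conditions $\Im Q^{\pm}=\pm(m\varepsilon)^{1/2}(k\pi-\pi/4)$ and $\Im Q^{\infty}=(m\varepsilon)^{1/2}k\pi$ of Theorem \ref{gamma}.

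For the inclusion claimed by the present theorem I run this backwards. Fix $\delta>0$; on any compact subset of $\overline\Pi\setminus\Gamma_\delta$ the sign statements of Theorem \ref{gamma} ($\Re Q^{+}>0$ above $\tilde\gamma_{+}$ and $<0$ below, and likewise for $Q^{-}$ and $Q^{\infty}$) together with the univalence of the $Q$'s produce a uniform gap $c(\delta)>0$ between the largest $\Re\Phi_j$ and all the other $\Re\Phi_\ell$. Dividing $D(\mu,\varepsilon)$ by its single dominant exponential and using the WKBJ error bounds, which by the fundamental-solution theorem are uniform for $\mu$ and $r$ on the relevant compacta, the quotient equals $[1]^{\pm}+O\!\left(e^{-c(\delta)(m\varepsilon)^{-1/2}}\right)$, which is bounded away from $0$ once $\varepsilon<\varepsilon_{0}(\delta)$. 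Hence no Dirichlet eigenvalue lies in that compact part of $\overline\Pi\setminus\Gamma_\delta$.

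Finally the unbounded tail $\Im\mu\to-\infty$ of the semi-strip $\Pi$ must be treated separately, since the bounds above are uniform only on compacta. There one argues directly from the numerical-range inclusion $\{\langle\tilde L_{m,\varepsilon}h,h\rangle\,|\,\langle h,h\rangle=1\}\subset\Pi$ and a lower bound for $|\omega(r)-\mu|$ valid for $|\Im\mu|$ large: the turning point $r_\mu$ is then far from $[r^{-},r^{+}]$, the segment lies inside a single canonical domain, only the $Q^{\infty}$-configuration survives, and a rescaled WKBJ (or a direct resolvent) estimate confines the eigenvalues there to a neighbourhood of $\tilde\gamma_{\infty}\cap\{\Im\mu\ll 0\}\subset\Gamma$. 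Patching the compact and the tail estimates yields the theorem. I expect the main obstacle to be the first step: composing the connection matrices correctly in each position of $[r^{-},r^{+}]$ relative to the Stokes complex, so that the secular equation collapses to exactly these three quantization conditions with no spurious limiting branch; the uniformity of the WKBJ remainder in the non-compact spectral parameter $\mu$ is the secondary, more technical difficulty.
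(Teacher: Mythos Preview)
Your strategy---form the characteristic determinant $D(\mu,\varepsilon)$ from the WKBJ fundamental system and show via dominant balance that it cannot vanish off $\Gamma_\delta$---is exactly the paper's, and your proposal is essentially correct. The main difference is that you build in more machinery than this particular theorem requires. You assume from the outset that $r^-$ and $r^+$ lie in different canonical domains and therefore bring in connection (transmission) matrices and a multi-term expansion of $D$ indexed by the several possible positions of $[r^-,r^+]$ relative to $\mathcal{C}_\mu$. The paper's key simplification is that for every $\mu\in\Pi\setminus(\gamma_-\cup\gamma_+)$---hence for every $\mu\in\Pi\setminus\Gamma_\delta$---one can choose a \emph{single} canonical domain $\Omega_\mu$ containing a path from $r^-$ to $r^+$; consequently the determinant has exactly two exponential terms,
\[
\Delta(\mu)=\frac{T(\mu)}{r}\Bigl(e^{(m\varepsilon)^{-1/2}(S(r^+,\mu)-S(r^-,\mu))}[1]-e^{-(m\varepsilon)^{-1/2}(S(r^+,\mu)-S(r^-,\mu))}[1]\Bigr),
\]
and no connection matrices are needed at all. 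The sign analysis then splits into just two cases (above or below $(\tilde\gamma_-\setminus\gamma_-)\cup(\tilde\gamma_+\setminus\gamma_+)$): above, the path crosses the one Stokes line interior to $\Omega_\mu$, so $\Re S(r^+,\mu)$ and $\Re S(r^-,\mu)$ have opposite signs; below, the segment $[r^-,r^+]$ itself serves and $S(r^+,\mu)-S(r^-,\mu)=Q^\infty(\mu)$, whose real part is bounded away from zero off $\gamma_\infty$. The transmission formula you invoke is precisely what the paper saves for the \emph{next} theorem, where $\mu$ lies on $\gamma_\pm$ and a single canonical domain containing both endpoints is no longer available. Your approach would work and has the advantage of setting up Theorem~3.4 in the same stroke; the paper's is shorter here at the cost of deferring that machinery. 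Your separate treatment of the tail $\Im\mu\to-\infty$ is reasonable; in the paper it is absorbed into the ``below'' case since $|\Re Q^\infty(\mu)|$ grows there.
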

\begin{proof} 
Here we just outline a sketch of the proof.  For a given $\delta$, the set $\Pi\backslash\Gamma_\delta$ consists of three disjoint, connected, closed components, the `left', $\Lambda_l$, the `right', $\Lambda_r$ and the `upper', $\Lambda_u$.  We need to show that there is a number $\varepsilon_0>0$ so that for $0<\varepsilon<\varepsilon_0$, the characteristic determinant 
\[
\Delta(\mu)=\left| \begin{array}{cc} g^+(r^+,\mu) & g^+(r^-,\mu) \\
g^-(r^+,\mu) & g^-(r^-,\mu) \end{array} \right|
\]
 for the fundamental solutions (\ref{fundamental_solutions}) does not vanish on $\mu\in\Pi\backslash\Gamma_\delta$. It can be proven (see \cite{shkalikov2}) that for  $\mu\in\Pi\backslash (\gamma_-\cup\gamma_+)$ there exists a canonical domain $\Omega_\mu$ and a path $\gamma_\mu$ within $\Omega_\mu$ which connects the points $r^-$ and $r^+$.  Hence, the following representation holds:  
\begin{equation}
\Delta(\mu)=\frac{T(\mu)}{r}\left(e^{(m\varepsilon)^{-\frac12}(S(r^+,\mu)-S(r^-,\mu))}[1]-e^{-(m\varepsilon)^{-\frac12}(S(r^+,\mu)-S(r^-,\mu))}[1]\right)
\end{equation}
where $T(\mu)=(i(\omega(r^-)-\mu))^{-1/4}(i(\omega(r^+)-\mu))^{-1/4}$, which does not vanish in $\Pi$.  Therefore, $\Delta(\mu)=0$ is only possible if 
\[
e^{\pm(m\varepsilon)^{-\frac12}(S(r^+,\mu)-S(r^-,\mu))}=1+O(\varepsilon^{\frac12}). 
\]
A sufficient condition for $\Delta(\mu)\not=0$ is therefore $|\Re(S(r^+,\mu)-S(r^-,\mu))|>C(\delta)$ with a constant $C(\delta)$ depending only on $\delta$.  The proof requires a separate discussion depending on the position of $\mu$ with respect to the lines $\tilde\gamma_\pm$. In the case when $\mu$ is above  $(\tilde\gamma_-\backslash\gamma_-)\cup(\tilde\gamma_+\backslash\gamma_+)$, we use the fact that $r^-$ and $r^+$ are connected by $\gamma_\mu$ which  intercepts  one Stokes line only, at which $\Re S(r,\mu)$ changes sign.  Therefore, $\Re S(r^+,\mu)$ and $\Re S(r^-,\mu)$ are of opposite signs.  We use compactness arguments to arrive at the conclusion.  For $\mu$ under $(\tilde\gamma_-\backslash\gamma_-)\cup(\tilde\gamma_+\backslash\gamma_+)$, $\gamma_\mu$ can be chosen to be the segment $[r^-,r^+]$, and therefore w.l.o.g., $S(r^+,\mu)-S(r^-,\mu)=Q^\infty(\mu)$. $S(r^+,\mu)$ and $S(r^-,\mu)$ are of the same sign; however, we make use of the fact that $\Lambda_l\cup\Lambda_r$ is $\delta$-distance from the set where $\Re(Q^\infty(\mu))=0$ to arrive at the same conclusion.   Note that $\Delta(\mu)\not=0$ in this case is equivalent to  
\begin{equation}\label{Q}
e^{\pm(m\varepsilon)^{-\frac12}Q^\infty(\mu)}=1+O(\varepsilon^{\frac12}). 
\end{equation}
Further details can be found in \cite{shkalikov2}. \qed
\end{proof}
The last theorem can be strengthened in the sense that the Dirichlet eigenvalues of Eq. (\ref{eq:eigenvalue}) can be tracked by points $\mu_k^+$, $\mu_k^-$ and $\mu_k^\infty$, which lie on $\Gamma$ ($\gamma_+$, $\gamma_-$ and $\gamma_\infty$, respectively), and their scaling with $\varepsilon$ yields the scaling of the eigenvalues.  
\begin{theorem}
Let $\delta>0$ and let $\{\mu_k^+\}_{p_+}^{q_+}$, $\{\mu_k^-\}_{p_-}^{q_-}$, and $\{\mu_k^{\infty}\}_{p_\infty}^{\infty}$ be as in Theorem \ref{gamma}. Consider $\{\mu_k^+\}_{p_+'}^{q_+'}$, $\{\mu_k^-\}_{p_-'}^{q_-'}$, and $\{\mu_k^{\infty}\}_{p_\infty'}^{\infty}$ consisting of those points which lie outside $U_\delta(a)\cup U_\delta(b)\cup U_\delta(\mu_0)$.  Then there exists $C=C(\delta)>0$ such that every Dirichlet eigenvalue of Eq. (\ref{eq:eigenvalue}) lies either in $U_\delta(a)\cup U_\delta(b)\cup U_\delta(\mu_0)$ or in a $C\varepsilon$ neighborhood of one of the points from those three sets. Each neighborhood contains at most one eigenvalue.  
\end{theorem}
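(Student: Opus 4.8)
The plan is to sharpen the previous theorem by localizing, on each branch of $\Gamma$, the zeros of the characteristic determinant $\Delta(\mu)$ near the distinguished points $\mu_k^+$, $\mu_k^-$, $\mu_k^\infty$. Since the previous theorem already confines every Dirichlet eigenvalue to $\Gamma_\delta$, and a routine covering argument (shrinking $\delta$ by a fixed factor) reduces $\Gamma_\delta\setminus\bigl(U_\delta(a)\cup U_\delta(b)\cup U_\delta(\mu_0)\bigr)$ to a finite union of tubular neighborhoods $N$, each fibered over a compact sub-arc $A$ of one of $\gamma_+,\gamma_-,\gamma_\infty$ staying away from $a,b,\mu_0$, it suffices to argue on a single such pair $(A,N)$; I describe the case $A\subset\gamma_+$, the branches $\gamma_-$ and $\gamma_\infty$ being handled identically with a different constant $c_\bullet$ below.

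The first step is the scalar form of the eigenvalue condition on $N$. Using the WKBJ fundamental solutions (\ref{fundamental_solutions}) and the turning-point connection analysis of \cite{shkalikov2} --- which on the part of $\Pi$ adjacent to $\gamma_\pm$ replaces the canonical-path argument of the previous proof and is responsible for the Airy phase shift --- the determinant factors as $\Delta(\mu)=R(\mu)\bigl(e^{2(m\varepsilon)^{-1/2}Q^+(\mu)}[1]-c_+[1]\bigr)$, with $R(\mu)$ non-vanishing on $N$, the Birkhoff factors $[1]=1+O(\varepsilon^{1/2})$ uniformly on $N$, and $c_+=e^{-i\pi/2}=-i$ (one gets $c_-=i$ on $\gamma_-$, and the $\sinh$-type expression of the previous proof with $c_\infty=1$ on $\gamma_\infty$). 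Dividing by $R$ and by $[1]$ --- crucially one never multiplies by $e^{\pm(m\varepsilon)^{-1/2}Q^+}$, so uniformity is preserved --- the condition $\Delta(\mu)=0$ becomes $e^{2(m\varepsilon)^{-1/2}Q^+(\mu)}=c_+\bigl(1+\rho(\mu,\varepsilon)\bigr)$ with $\rho$ analytic on $N$ and $|\rho(\mu,\varepsilon)|\le C_1\varepsilon^{1/2}$. Comparing with Theorem \ref{gamma}, the zeros of the unperturbed equation $e^{2(m\varepsilon)^{-1/2}Q^+(\mu)}=c_+$ lying in $N$ are exactly the points $\mu_k^+$: on $\gamma_+$ one has $\Re Q^+=0$, whence $2(m\varepsilon)^{-1/2}Q^+(\mu_k^+)=2i\bigl((m\varepsilon)^{-1/2}\Im Q^+(\mu_k^+)\bigr)=2i(k\pi-\pi/4)$ and $e^{2i(k\pi-\pi/4)}=-i=c_+$.

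The second step is a conformal straightening. By Theorem \ref{gamma}, $Q^+$ is univalent on $\Pi$, vanishes in real part on $\gamma_+$, and has strictly monotone imaginary part there, so (after shrinking $N$) $\zeta=Q^+(\mu)$ maps $N$ biholomorphically onto a neighborhood of a segment of the imaginary axis, with $0<c_3\le|(Q^+)'|\le c_4$ on $\overline N$. In the $\zeta$ variable the condition is $e^{2(m\varepsilon)^{-1/2}\zeta}=c_+(1+\rho)$, whose unperturbed solutions $\zeta_k=Q^+(\mu_k^+)$ lie on the imaginary axis, spaced $\pi(m\varepsilon)^{1/2}$ apart. Now I would apply Rouch\'e's theorem to $F(\zeta)=e^{2(m\varepsilon)^{-1/2}\zeta}-c_+$ and $G(\zeta)=-c_+\rho(\zeta,\varepsilon)$ on the circles $\{|\zeta-\zeta_k|=C_2\varepsilon\}$. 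Since $(m\varepsilon)^{-1/2}\varepsilon\to0$, on such a circle $|F(\zeta)|=|e^{2(m\varepsilon)^{-1/2}(\zeta-\zeta_k)}-1|=2(m\varepsilon)^{-1/2}C_2\varepsilon\,(1+o(1))$, which beats $|G|\le C_1\varepsilon^{1/2}$ once $C_2>C_1\sqrt m/2$ and $\varepsilon$ is small; hence $F+G$ has exactly one zero inside, i.e.\ there is exactly one eigenvalue with $|\mu-\mu_k^+|\le c_3^{-1}C_2\varepsilon=:C\varepsilon$. Conversely, for $\zeta\in Q^+(N)$ with $|\zeta-\zeta_k|\ge C_2\varepsilon$ for every $k$, taking the $k$ with $|\Im\zeta-\Im\zeta_k|\le\frac{\pi}{2}(m\varepsilon)^{1/2}$ and distinguishing $|\Re\zeta|\gtrsim(m\varepsilon)^{1/2}$ from $|\Re\zeta|\lesssim(m\varepsilon)^{1/2}$ gives $|F(\zeta)|\ge c_0\min\!\bigl(1,(m\varepsilon)^{-1/2}|\zeta-\zeta_k|\bigr)\gg\varepsilon^{1/2}\ge|G|$, so $F+G$ has no zero there. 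As the disks $\{|\mu-\mu_k^+|\le C\varepsilon\}$ are pairwise disjoint for small $\varepsilon$ (their centers are $\gtrsim(m\varepsilon)^{1/2}$-separated) and likewise across the three branches away from the knot, collecting the contributions of all arcs together with the previous theorem yields the claim, with in fact exactly one eigenvalue near each listed $\mu_k^\bullet$.

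The main obstacle is the scalar reduction of the first step on the branches $\gamma_\pm$: there the turning point $r_\mu$ approaches a boundary point, the canonical-path argument used for $\gamma_\infty$ breaks down, and one must instead invoke the Airy-type connection formula across the turning point --- precisely the mechanism that produces the $-\pi/4$ Maslov correction in the quantization rules of Theorem \ref{gamma} --- and control its error uniformly in $\mu$ along the arc; this is exactly the content imported from \cite{shkalikov2}. Everything downstream is soft, the one genuinely quantitative point being that the $O(\varepsilon^{1/2})$ remainder $\rho$ is absorbed by a ball of radius $O(\varepsilon)$ in $\mu$ rather than $O(\varepsilon^{1/2})$, because the exponent $(m\varepsilon)^{-1/2}Q^\bullet$ has derivative of order $\varepsilon^{-1/2}$; this is what fixes the localization scale at $\varepsilon$.
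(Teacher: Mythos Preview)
Your proposal is correct and follows essentially the same route as the paper's proof: both reduce the characteristic determinant near $\gamma_+$ via the transmission (turning-point connection) formula between neighboring canonical domains to a scalar equation of the form $e^{2(m\varepsilon)^{-1/2}Q^+(\mu)}=-i\,(1+O(\varepsilon^{1/2}))$ (the paper writes this equivalently as $\sin\bigl(2(m\varepsilon)^{-1/2}\Im Q^+(\mu)\bigr)=-1+O(\varepsilon^{1/2})$), and both treat $\gamma_\infty$ without the connection formula. The only substantive difference is that you carry out the Rouch\'e localization explicitly in the conformal variable $\zeta=Q^+(\mu)$, whereas the paper simply records the quantization condition $\Im Q^+(\mu)=(m\varepsilon)^{1/2}(k\pi-\pi/4)$ and refers to \cite{shkalikov2} for the justification that the $O(\varepsilon^{1/2})$ remainder may be neglected; your observation that the derivative $\sim(m\varepsilon)^{-1/2}$ converts the $O(\varepsilon^{1/2})$ error into an $O(\varepsilon)$ localization radius is exactly the mechanism hidden behind that citation.
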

\begin{proof}
Again, we only sketch the proof.  Let us focus on the curve $\gamma_+$.  For the sake of clarity, let us assume that $\mu\in\gamma_+\backslash(U_\delta(b)\cup U_\delta(\mu_0))$. Recall that with each $\mu\in\Pi$ a three-line Stokes complex $\mathcal{C}_\mu=\ell_{\rm left}\cup\ell_{\rm right}\cup\ell_{\rm lower}$ is associated. These yield three distinct canonical domains, each of which  has one of the Stokes lines from the complex belonging to it, and the other two on its boundary.  We denote each one by $\Omega_\mu^{\rm left}$,  $\Omega_\mu^{\rm right}$ or $\Omega_\mu^{\rm lower}$, depending which one of the three Stokes lines belongs to it. Let us consider the two sets of fundamental solutions $g^\pm_{\rm left\slash right}$ associated with canonical domains $\Omega_\mu^{\rm left}$ and $\Omega_\mu^{\rm right}$.  For $\mu\in\Pi$, $r^+\in \Omega_\mu^{\rm right}$ and $r^-\in \Omega_\mu^{\rm left}$. Actually, based on our assumption $\mu\in\gamma_+$, we have $r^+\in\ell_{\rm right}$. In the expression (\ref{fundamental_solutions}) for $g^\pm_{\rm left}$, we fix a branch by the condition $S(r^-,\mu)=\alpha(\mu)<0$ for $\mu$ sufficient close to $\gamma_+$.  

The two fundamental sets of solutions (\ref{fundamental_solutions}) are related through the following transmission formula for neighboring canonical domains: 
\[
\left( \begin{array}{c} g^+_{\rm left}(r,\mu) \\
g^-_{\rm left}(r,\mu) \end{array}  \right)
=e^{i\pi/6}\left( \begin{array}{cc} -i[1] & [1] \\
                                         1 & 0 \end{array} \right)
\left( \begin{array}{c} g^+_{\rm right}(r,\mu) \\
g^-_{\rm right}(r,\mu) \end{array}  \right), \ \ r\in \Omega_\mu^{\rm right}. 
\]
For more on this theory, we refer the reader to the monograph \cite{fedoryuk}.  Here, like before $[1]=1+O((m\varepsilon)^{\frac12})$ and $|O((m\varepsilon)^{\frac12})|\leq C(m\varepsilon)^{\frac12}$.  The constant $C=C(r,\mu)$ depends on $r$ and $\mu$ in general.  However, given a compact $K$ in $\Omega_\mu^{\rm right}$ there exists a neighborhood $U(\mu)$ of $\mu$ such that $C=C(K)$ depends only on $K$. 

Consider again the characteristic determinant 
\[
\Delta(\mu)=\left| \begin{array}{cc} g_{\rm left}^+(r^-,\mu) & g_{\rm left}^+(r^+,\mu) \\
g_{\rm left}^-(r^-,\mu) & g_{\rm left}^-(r^+,\mu) \end{array} \right|.
\]
Using the transmission formula, we obtain 
\[
\Delta(\mu)=\frac{T(\mu)e^{i\pi/6}}{r}\left| \begin{array}{cc} [1]e^{(m\varepsilon)^{-\frac12}S(r^-,\mu)}& -i[1]e^{(m\varepsilon)^{-\frac12}(S(r^+,\mu)}+[1]e^{-(m\varepsilon)^{-\frac12}S(r^+,\mu)} \\ \ [1]e^{-(m\varepsilon)^{-\frac12}S(r^-,\mu)} & [1]e^{(m\varepsilon)^{-\frac12}(S(r^+,\mu)}\end{array} \right|.
\]
Since $\Re S(r^+,\mu)=0$ and $\Re S(r^-,\mu)<0$,  the term $e^{(m\varepsilon)^{-\frac12}S(r^-,\mu)}$ ($e^{-(m\varepsilon)^{-\frac12}S(r^-,\mu)}$) decays (grows) exponentially with $\varepsilon$, while $e^{(m\varepsilon)^{-\frac12}S(r^+,\mu)}$ remains bounded.  Therefore, $\Delta(\mu)=0$ is equivalent (up to exponentially small terms) to 
\[
e^{-(m\varepsilon)^{-\frac12}S(r^+,\mu)} -ie^{(m\varepsilon)^{-\frac12}S(r^+,\mu)} =O(\varepsilon^{\frac12}).
\]
This equation, in turn, is equivalent to 
\[
\sin\left(2(m\varepsilon)^{-\frac12}\Im Q^+(\mu)\right)=-1+O(\varepsilon^{\frac12}).
\]
If we neglect the $O(\varepsilon^{\frac12})$ term, the roots near the curve $\gamma^+$ are determined from the equation 
\[
\Im Q^+(\mu)=(m\varepsilon)^{\frac12}(k\pi-\pi/4),\ \ k\in\mathbb{Z}.
\]
The conclusion for eigenvalues near $\gamma^+$ follows similarly.  Near $\gamma^\infty$, the conclusion is simpler and doesn't require the transmission formula.  Note that the relation (\ref{Q}) for eigenvalues  bellow $(\tilde\gamma_-\backslash\gamma_-)\cup(\tilde\gamma_+\backslash\gamma_+)$ reads 
\[
\sin\left((m\varepsilon)^{-\frac12}\Im Q^\infty(\mu)\right)=O(\varepsilon^{\frac12}). 
\]
If we were allowed to neglect $O(\varepsilon^{\frac12})$, we would obtain the following formula for the solutions
\[
\Im Q^\infty(\mu)=(m\varepsilon)^{\frac12}k\pi,\ \ k\in\mathbb{Z}.
\]
Again, the details, and in particular the justification for neglecting  $O(\varepsilon^{\frac12})$ can be found in \cite{shkalikov2}.  \qed
\end{proof}
\begin{remark}
The last two theorems remain valid in the case of the disk.  The proof has to be slightly modified, however, because the approximation (\ref{fundamental_solutions}) breaks down in the neighborhood of the singularity $r=0$.  Instead, for $r\ll1$, (\ref{fundamental_solutions}) can be replaced by the following approximation 
\begin{equation}\label{approximate_solutions1}
g_{\rm{app}}^{\pm}(r,\mu)=\frac{r^{m}}{\sqrt[4]{{i}\left(\omega(r)-\mu\right)}}e^{\pm (m\varepsilon)^{-\frac12}S(r,\mu)}.  
\end{equation}
Recall that the boundary condition at $r=0$ is set to  {$\lim_{r\to0^+}  \frac{g_{m,n}(r)}{r^m}=1$}.  The discussion about the sign of the characteristic determinant is the same as in the case of the annulus.  
\end{remark}
\begin{remark}
We are particularly interested in the scaling with respect to $\varepsilon$ of $\mu_{p_\pm}$ (recall that it has the  maximal imaginary part on that branch).  Note that $\mu^\pm_{p_\pm}\to a^\pm$  (and in particular $\Im \mu^\pm_{p_\pm}\to 0$) as $\varepsilon\to0$.  Two generic examples for which the functions $Q^\pm$ can be computed explicitly are $\omega(r)=a^-+(r-r_0)$ and $\omega(r)=a^-+(r-r_0)^2$. For the former, one finds easily that $Q^\pm(\mu)=\pm\frac{2 e^{i\pi/4}}{3}(a^\pm-\mu)^\frac32$, leading to $\mu_k^\pm=a^\pm\mp e^{\pm i\pi/6}(m\varepsilon)^\frac{1}{3}r_k$, where $r_k=\left(\frac{3\pi}{2}\left(k-\frac{1}{4}\right)\right)^\frac{2}{3}$. In this case, $\mu_1^\pm-a^\pm\sim\varepsilon^\frac{1}{3}$ as $\varepsilon\to0$. In the latter case, one easily obtains $Q^-(\mu)=\frac{\pi}{4}e^{3\pi/4}(\mu-a^-)$ and $\mu_k^-=a^-+(1-4k)(m\varepsilon)^\frac{1}{2} e^{-i\pi/4}$. In this case, $\mu^-_0-a^-\sim\varepsilon^\frac{1}{2}$ as $\varepsilon\to0$. It can be verified easily that these two types of scalings for $\mu^-_{p_-}$ depend on the local behavior of $\omega$ in the neighborhood of $r_0$.  In other words,  $\omega(r)=a^-+(r-r_0)+o(|r-r|^2)$ (we will refer to this as `locally linear') leads to the scaling $\mu_{p_-}^--a^-\sim \varepsilon^\frac{1}{3}$ while $\omega(r)=a^-+(r-r_0)^2+o(|r-r|^3)$ (we refer to it as `locally quadratic') leads to the scaling $\mu_{p_-}^--a^-\sim \varepsilon^\frac{1}{2}$. 

In the case of a regularized vortical flow, $\omega(r)=\frac{1}{a^2+r^2}$ is locally  quadratic in the neighborhood of $r^-=0$ (disk) leading to $\mu_{p_-}^--a^-\sim \varepsilon^\frac{1}{2}$, and locally linear in the neighborhood of $r^->0$ (annulus) leading to $\mu_{p_-}^--a^-\sim \varepsilon^\frac{1}{3}$.  For the branch near $a^+$ it is locally linear in either case leading to $\mu_{p_+}^+-a^+\sim \varepsilon^\frac{1}{3}$.  The situation is similar in the case of the two-dimensional creeping Couette flow between two concentric cylinders of radii $0<r^-<r^+$ with the outer cylinder moving with velocity $\Omega r^+$.  The velocity field is given by 
\[
v_\theta (r)=\Omega r\frac{1-(r^-/r)^2}{1-(r^-/r^+)^2},
\]
so that the potential $\omega(r)=v_\theta(r)/r$ is increasing and it is locally linear at both $r^\pm$ leading to $\mu_{p_\pm}^\pm-a^\pm\sim \varepsilon^\frac{1}{3}$ on both branches.  
\end{remark}
\begin{remark}
The above developed theory becomes much more involved if $\omega$ is not monotonic. However, the problem is still tractable if $\omega:[r^-,r^+]\to\mathbb{R}$ is such that it decreases on $[r^-,r^c]$ and increases on $[r^c,r^+]$.  Assume for simplicity that $a^c=\omega(r^c)<a^-=\omega(r^-)<a^+=\omega(r^+)$, and let  $\Pi^\pm=\{\mu\ \big|\ \Im(\mu)<0,\ \Re(\mu)\in(a^c,a^\pm)\}$.  Let $G^\pm$ be the preimage of $\Pi^\pm$ under $\omega$ as before.  The equation $\omega(r)=\mu$ has two unique roots $r_\mu^\pm\in G^\pm$.   As before, we define the functions 
\[
Q^\pm_-(\mu)= \pm\int_{r^-_\mu}^{r^\pm}\sqrt{i\left(\omega(\xi)-\mu\right)}\ d\xi\ \  {\rm and}\ Q^\pm_+(\mu)= \pm\int_{r^+_\mu}^{r^\pm}\sqrt{i\left(\omega(\xi)-\mu\right)}\ d\xi
\]
and additional functions 
\[
Q^c(\mu)= \int_{r_\mu^-}^{r_\mu^+}\sqrt{i\left(\omega(\xi)-\mu\right)}\ d\xi\ \  {\rm and}\ Q^\infty(\mu)= \int_{r^-}^{r^+}\sqrt{i\left(\omega(\xi)-\mu\right)}\ d\xi, 
\]
and similarly as before the lines $\tilde\gamma_\pm^\pm$, $\tilde\gamma_c$ and $\tilde\gamma_\infty$ and let $\tilde\Gamma$ be the union of these six lines.  The spectral limit graph $\Gamma$ has a much more complicated structure (it is a subset of $\tilde\Gamma$), and it is beyond the scope of this paper to describe the whole structure. However, there are three lines of $\Gamma$ emerging from $a^c$, $a^-$ and $a^+$: $\gamma_c$, $\gamma_-^-$ and $\gamma_+^+$, respectively.  The scaling of eigenvalues with respect to $\varepsilon$ along these lines can be determined in a similar fashion as before from the local behavior of $\omega(r)$ in the neighborhood of $r^c$, $r^-$ and $r^+$, respectively.  

The above situation applies for example in the case of the case of the parabolic MHD-driven Poiseuille profile $v(r)=C(r-r^-)(r-r^+)$. In this case $\omega(r)= C(r-r^-)(r^+-r)/r$ and the above set-up applies with $r^c=\sqrt{r^-r^+}$.  A similar analysis as before would show that the locally quadratic behavior at $r^c$ leads to the scaling $\mu_{p_c}^c-a^c\sim \varepsilon^\frac{1}{2}$, and the locally linear behavior at $r^\pm$ leads to the scaling $\mu_{p_\pm}^\pm-a^\pm\sim \varepsilon^\frac{1}{3}$. 
\end{remark}


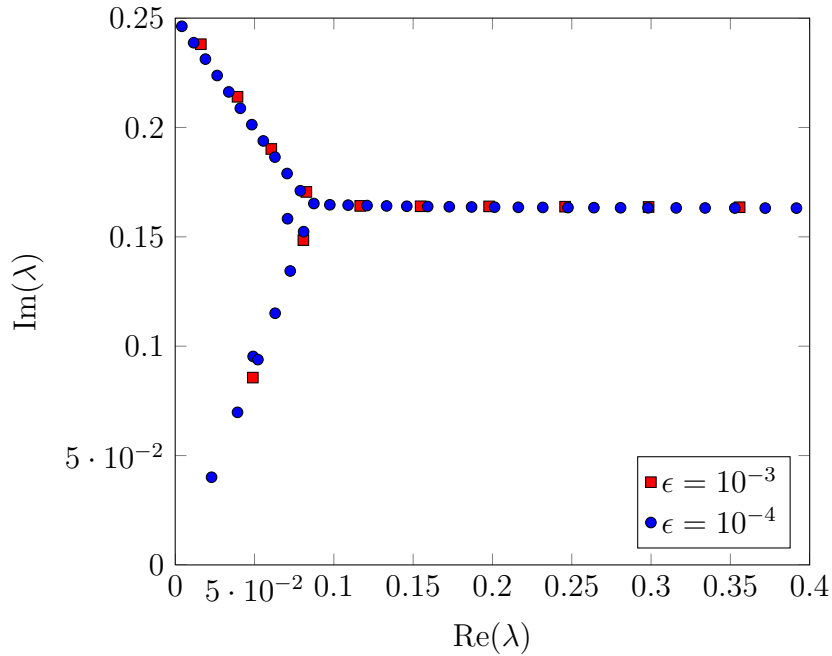
\begin{figure}[htb]
  \centering
\begin{tikzpicture}
\begin{axis}[%
scale = 1.0,
scale only axis,
xmin=0, xmax=0.4,
xlabel={$\mathrm{Re(}\lambda\mathrm{)}$},
ymin=0, ymax=0.25,
ylabel={$\mathrm{Im(}\lambda\mathrm{)}$},
legend style={at={(0.97,0.03)},anchor=south east,align=left}]
\addplot [
color=black,
only marks,
mark=square*,
mark options={solid,fill=red}
]
coordinates{
 (0.0161592953799117,0.238064396985122)(0.039289651242083,0.214031692769812)(0.0489415379309431,0.0856438755330143)(0.0605874845086697,0.190175254204576)(0.0808050728860578,0.148502362752823)(0.0826698675656788,0.170525483303019)(0.116652391563231,0.164137617833728)(0.154709618085381,0.163970637091944)(0.197890580809375,0.163918806206389)(0.245804032724008,0.163782839198718)(0.298433480807554,0.163644597705062)(0.355826768005361,0.163526393326158) 
};
\addlegendentry{$\epsilon = 10^{-3}$};
\addplot [
color=black,
only marks,
mark=*,
mark options={solid,fill=blue}
]
coordinates{
 (0.00417852863069767,0.246249648405262)(0.0116365961732929,0.238749891049587)(0.0190531613746348,0.231250788822956)(0.0229238315722611,0.0400845506859401)(0.0264286494535782,0.223752707479643)(0.0337631107622743,0.216255949180428)(0.0392835797318073,0.0697593264773816)(0.041055159534157,0.208762134155272)(0.0483038866449335,0.201281412087952)(0.0491193420359818,0.0953153001176479)(0.0521294744080335,0.0938695645832858)(0.0555340851805447,0.193842619767172)(0.0628641730131513,0.186451295872099)(0.0630087795540379,0.115067247389238)(0.0705344109136717,0.178941950821289)(0.0708113337807902,0.158279409964752)(0.0725157799852346,0.134385469358669)(0.0788969197633386,0.171063717498712)(0.0809572490909513,0.15237341486265)(0.0874111463888288,0.165219478078568)(0.0974769568099648,0.16464781633809)(0.109019519501445,0.164489114472679)(0.120938256522444,0.164295202498869)(0.133258380054352,0.164125636944967)(0.14599973418076,0.163979302250967)(0.159172638989516,0.163853103577884)(0.172784477616624,0.16374414933712)(0.186841323362028,0.163649912130082)(0.201348430969099,0.163568220734194)(0.216310417657222,0.163497225765996)(0.231731357097835,0.163435360197608)(0.247614846800813,0.163381299921384)(0.263964064881162,0.163333927106334)(0.280781820306519,0.163292297711719)(0.298070597709588,0.163255613499759)(0.315832597167691,0.163223198335288)(0.334069769309832,0.163194478320224)(0.352783846136376,0.163168965219409)(0.371976367970423,0.163146242674581)(0.39164870697179,0.16312595474284) 
};
\addlegendentry{$\epsilon = 10^{-4}$};
\end{axis}
\end{tikzpicture}%
  \caption{Spectral graph for the Poiseuille profile for two values of $\epsilon$.}
  \label{fig:spectrum}
\end{figure}

We illustrate the results by numerically computing the spectrum for the Poiseuille profile with
$r^{-} = 0.25$ and $r^+ = 1.$  using a standard Chebyshev polynomial ($N = 84$) co-location scheme
and ARPACK to solve the resulting eigenvalue problem 
(\ref{eq:eigenvalue}). 
The spectral graph (Fig. 1) for $m=1$ shows the two main solution branches localizing on the
right end point and the critical point at $r = 1/2$ as shown in the left panel of Fig.\ 2.  
The right panel of Fig.\ 2 clearly indicates the expected scaling with $\epsilon$ for the two
branches.

\begin{figure}[htb]
  \centering
  \includegraphics[width=\textwidth]{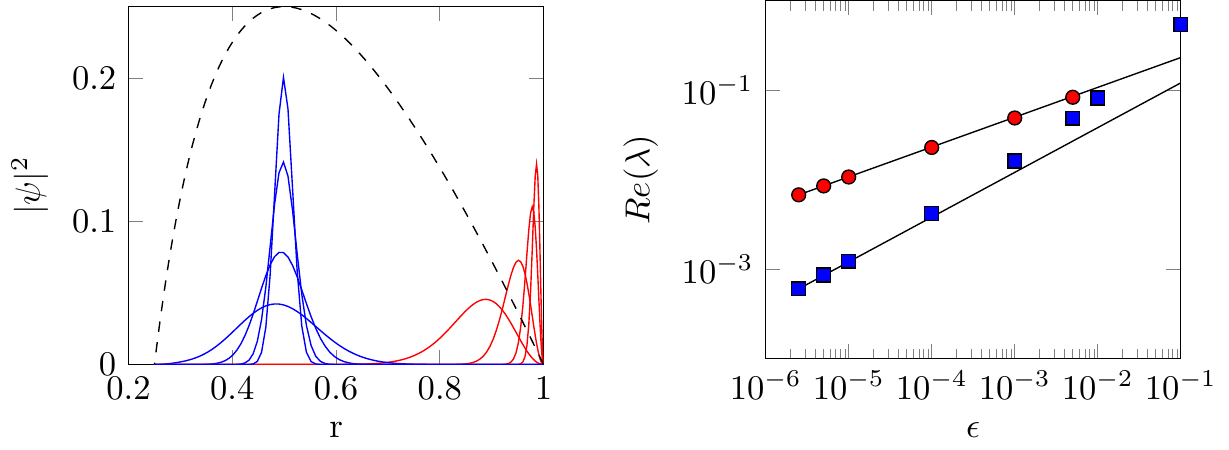}
  \caption{Left panel: Localization of eigenfunctions for the spectral branch emerging from the
  right end point (red) and the quadratic critical point (blue). Eigenfunctions shown for
$\epsilon = [10.^{-3},10^{-4},10^{-5},2.5\times 10^{-6}]$ with $w(r)$ shown in black. 
The right panel shows scaling 
of each branch along with the prediction $\sim\epsilon^{1/3}$ and $\sim\epsilon^{1/2}$.}
  \label{fig:scaling}
\end{figure}

\section{Numerical comparison of full and averaged dynamics.}
\label{sec:numerical-results}
In this section, we compare numerical solutions of 
the original equation (\ref{eq:AD_ready_for_averaging}) and the averaged equation to the first
order (\ref{eq:AD_averaged}). We consider the evolution of the
tracer field on a unit disk $(0\le r \le 1, 0\le \theta \le 2\pi)$
with zero Dirichlet boundary conditions $v(r=1,\theta) = 0$.  We
compare two solutions of the equations
(\ref{eq:AD_ready_for_averaging}) and (\ref{eq:AD_averaged}) at
Poincar\'{e} sections where $F = 0$. We use Chebychev spectral methods
to numerically approximate spatial differentiation operators and a
second order Crank-Nicolson finite difference scheme in time. For
numerics the following parameters were chosen: $a = 0.05, f(t) =
\sin(2\pi t/T)$ and $ T = 1$. For this choice of the advective force
parameters $\avg{F}$ and $\avg{F^2}$ are found to be $1/2\pi$ and
$3/8\pi^2$ correspondingly. Fig~\ref{fig:initial_10periods} represents
10 periods of evolution for some initial state, which is taken to be 
$v_0(r,\theta) = r e^{-br^2}\cos(\pi r/2)$.
\begin{figure}[h!]
  \centering
  \includegraphics[scale=.6]{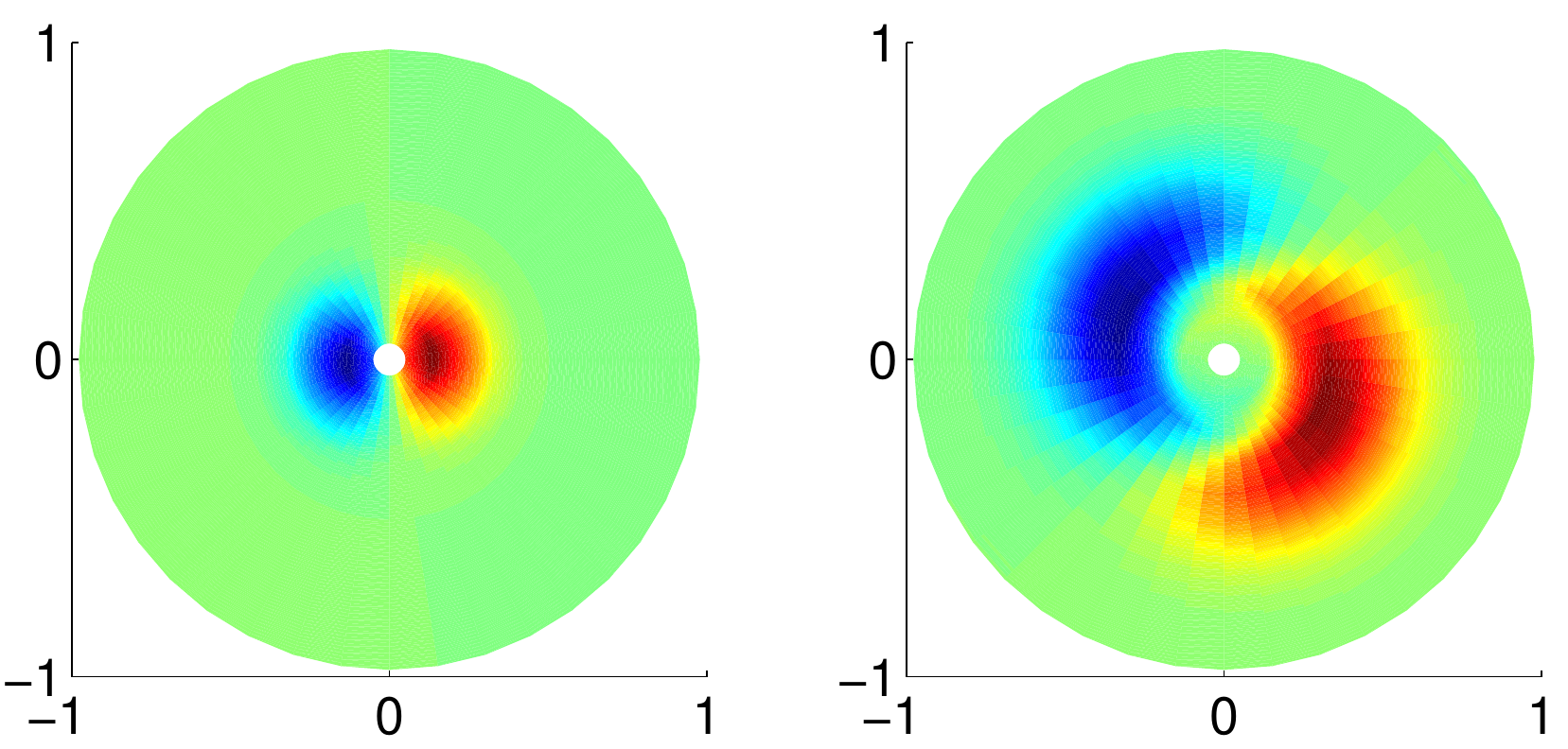}
  \caption[Evolution of the scalar field in time-dependent vortical
  field.]{Evolution of the scalar tracer field in a time-dependent
    vortical velocity field. The figure on the left represents the
    initial condition, the figure on the right shows the state of the
    system after 10 periods.}
  \label{fig:initial_10periods}
\end{figure}

In order to get a better understanding of the differences between the
behavior of the full and approximate equations we introduce the
following operators (for both the exact and averaged equations) that
maps the scalar field between two consequent Poincar\'{e} sections:
\begin{equation}
  \label{eq:1period_operator}
  \begin{array}{c} 
    \mathcal{Q}:\qquad u(r,\theta,t+T) = \mathcal{Q} u(r,\theta,t),\qquad \\ 
    \mathcal{Q}_{\mathrm{av}}:\qquad u_{\mathrm{av}}(r,\theta,t+T) = 
    \mathcal{Q}_{\mathrm{av}} u_{\mathrm{av}}(r,\theta,t) 
  \end{array}
\end{equation}
We can now study how the eigenvalues $\mathcal{Q}\psi_j =
\lambda_j\psi_j$ of the above operators change with $\varepsilon$.  To do
that we introduce relative difference in the eigenvalues of operators
as
\begin{equation}
  \label{eq:relative_difference_labmdas}
  \delta \lambda_j = \frac{|\lambda^{(\mathrm{full})}_j - 
    \lambda_j^{(\mathrm{av})}|}{\lambda_j^{(\mathrm{full})} }
\end{equation}
Here $\lambda^{(\mathrm{full})}_j$ and $\lambda^{(\mathrm{av})}_j$ are
$j$-th eigenvalues of the full and averaged operator.  In
Fig.~\ref{fig:lambdas_vs_epsilon_smallDT} we present $\delta
\lambda_j(\varepsilon)$ for several eigenmodes.

\begin{figure}[h!]
  \centering
  \includegraphics[scale=.6]{./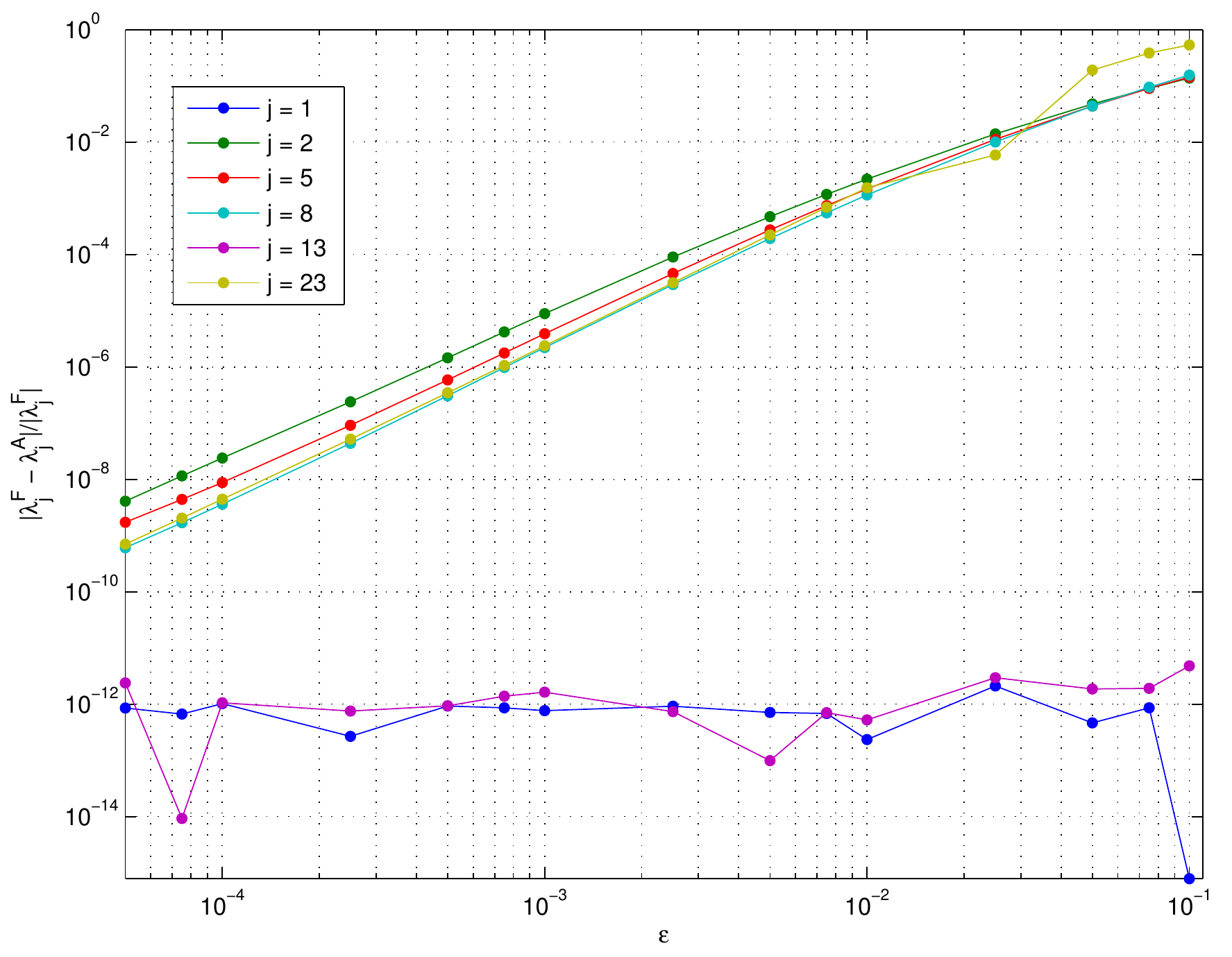}
  \caption[Difference in the calculation of eigenvalues of full and
  average operator as a function of parameter $\varepsilon$.]
  {Difference in the calculation of eigenvalues
    of the full and average operators (as given by
    (\ref{eq:relative_difference_labmdas})) as a function of the parameter
    $\varepsilon$.}. 
\label{fig:lambdas_vs_epsilon_smallDT}
\end{figure}
It is immediately seen from this plot that modes 1 and 13 are almost
identical for any value of the diffusion. This follows from the
observation that these modes possess axial symmetry, and, therefore,
the procedure of averaging does not have any effect on the one-period
evolution. For the modes that possess axial symmetry
advection-diffusion equation (\ref{eq:initial_ad_eqn}) reduces to heat
equation since advection implies only rotational translation. We also
conclude that $\delta \lambda_j \sim \varepsilon^\alpha$ where $\alpha$
is found to be equal $\alpha = 0.82$.

We now consider flow with time dependence in the form 
\begin{equation}
  \label{eq:flow_field_with_mean}
  f(t) = f_0(t) + f_1, 
\end{equation}
where $f_0(t)$ is periodic and mean-free function of time and $f_1$ is
a constant. The motion
corresponds to the rotation of the system as a whole with a constant
angular velocity $\omega$ (which still is a function of $r$) and
periodic oscillations superposed with this rotational motion. Because
of the fact that rotational motion is dependent upon $r$, large
gradients are constantly created in the scalar field. These gradients
are exposed to the action of diffusive smearing. The enhanced
stretching of the tracer field creates somewhat richer dynamics and
provides for faster mixing. We demonstrate evolution of the initial
state for the case of the flow (\ref{eq:flow_field_with_mean}) and
$\varepsilon = 0.01$ in the Fig.~\ref{fig:meanF_Evolution1}. 
\begin{figure}[htb]
  \centering
  \subfigure[$t = T$]{
    \includegraphics[width=1.1in]{./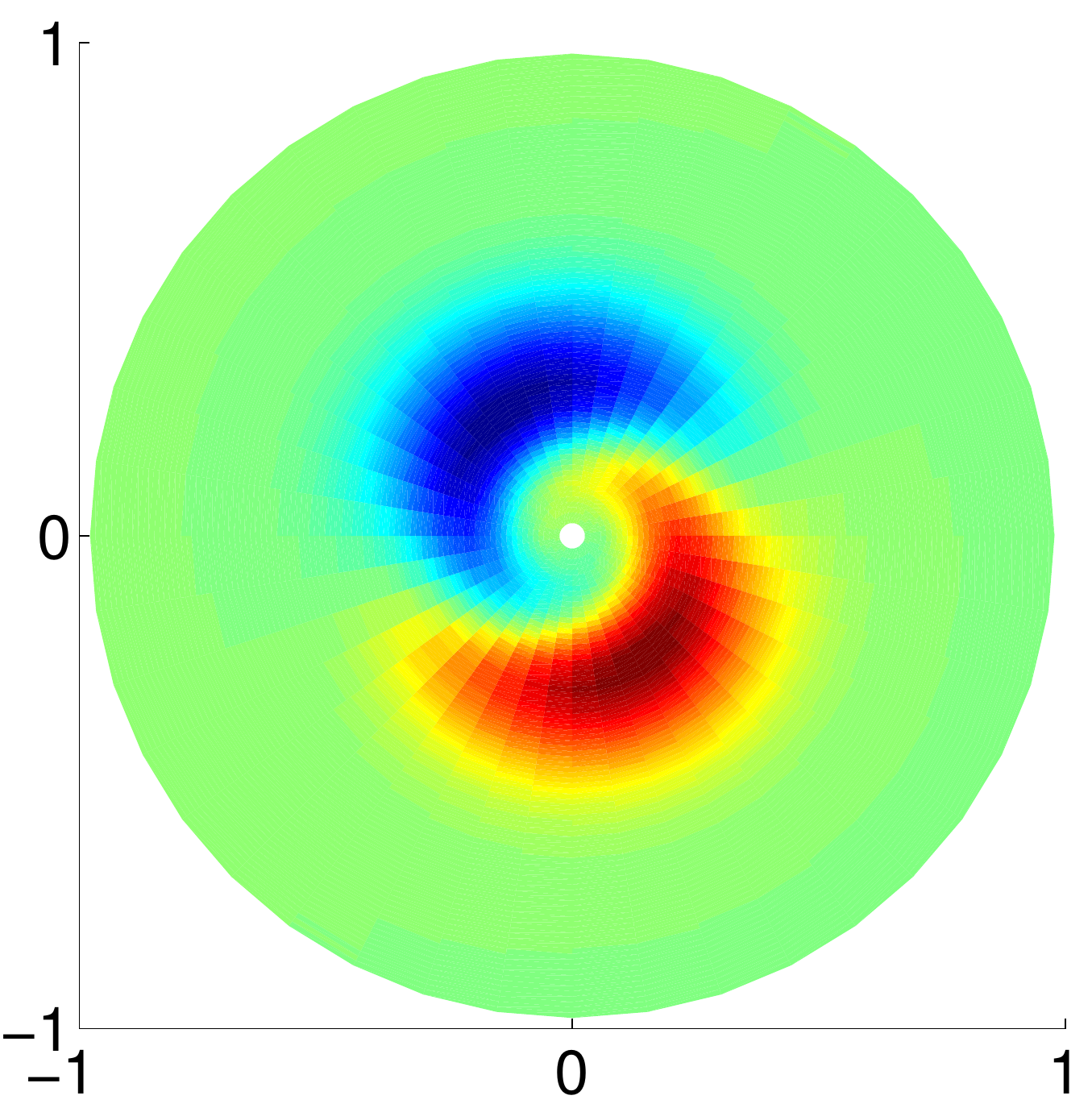}
  }
  \subfigure[$t = 10T$]{
    \includegraphics[width=1.1in]{./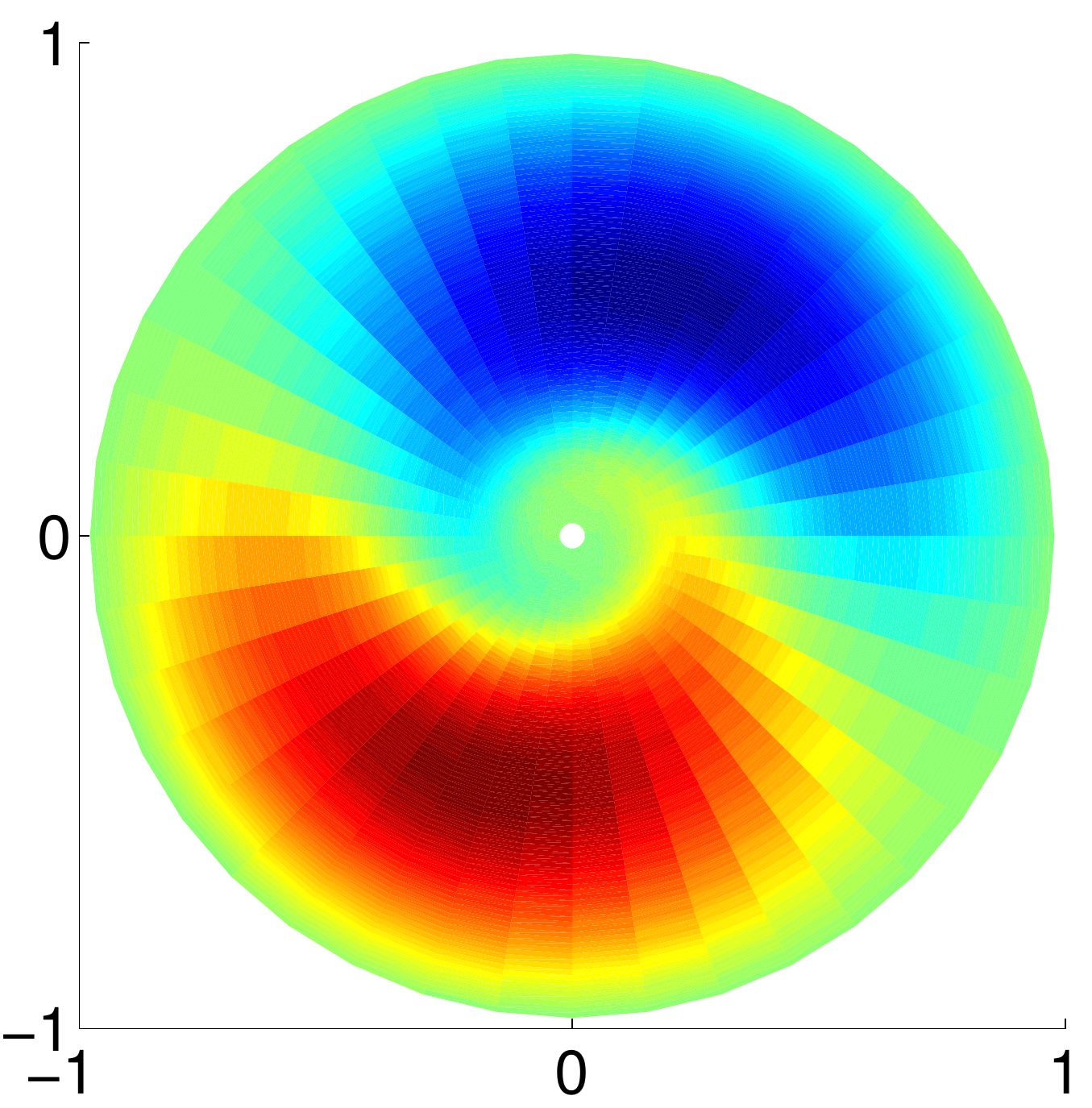}
  } 
  \subfigure[$t = 20T$]{
    \includegraphics[width=1.1in]{./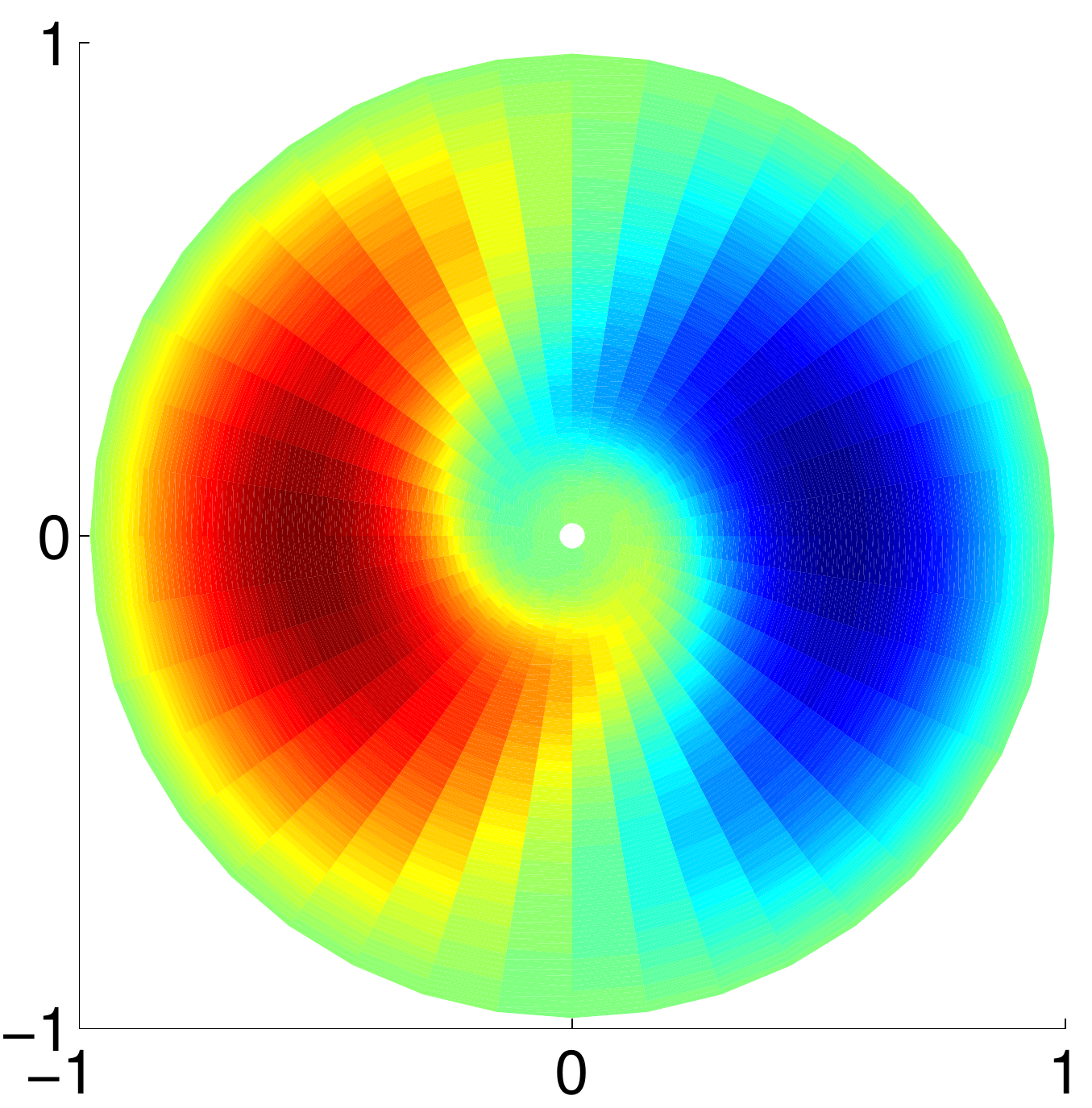}
  }
  \subfigure[$t = 30T$]{
    \includegraphics[width=1.1in]{./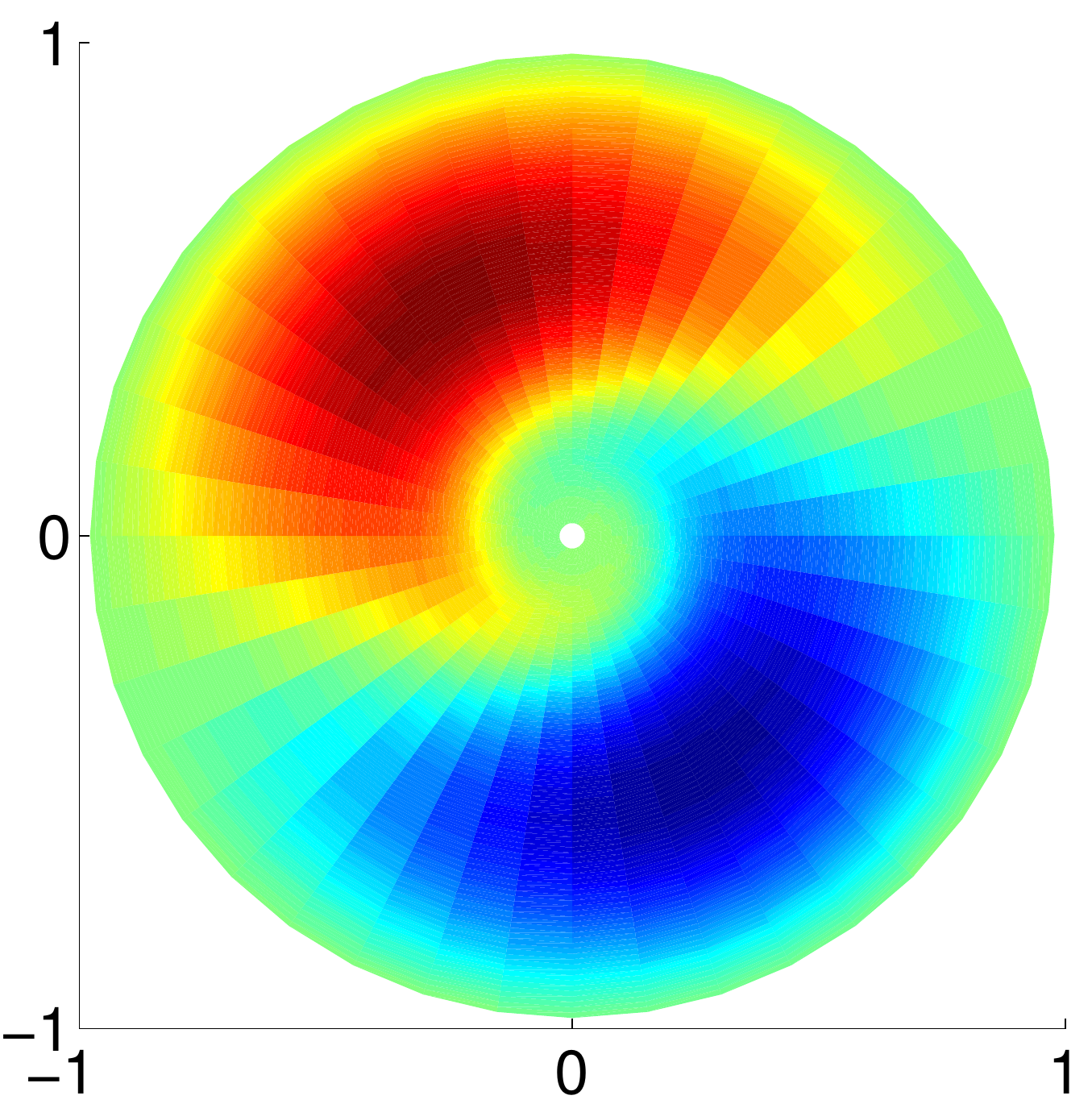}
  } \\
  \subfigure[$t = T$]{
    \includegraphics[width=1.1in]{./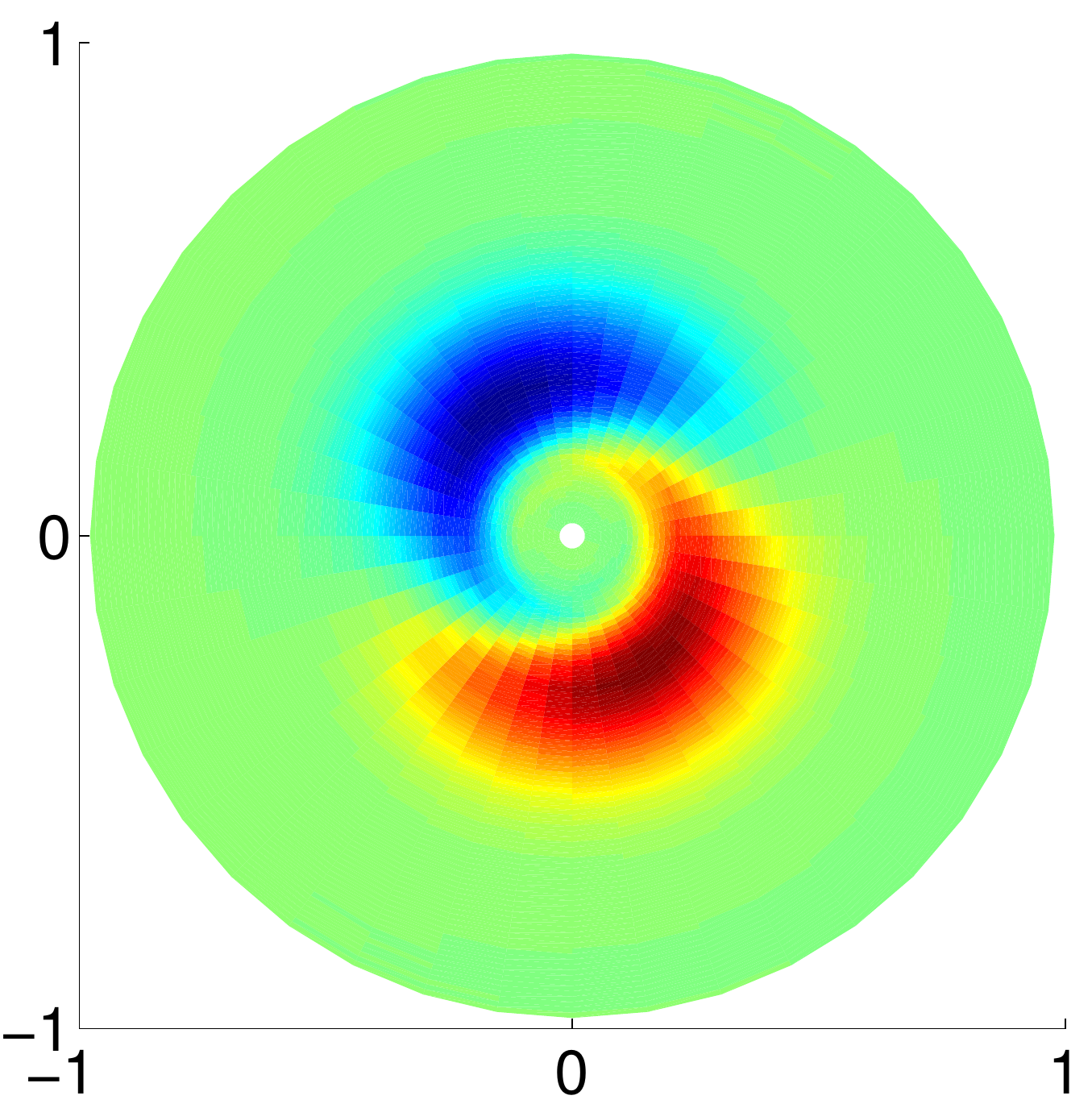}
  } 
  \subfigure[$t = 10T$]{
    \includegraphics[width=1.1in]{./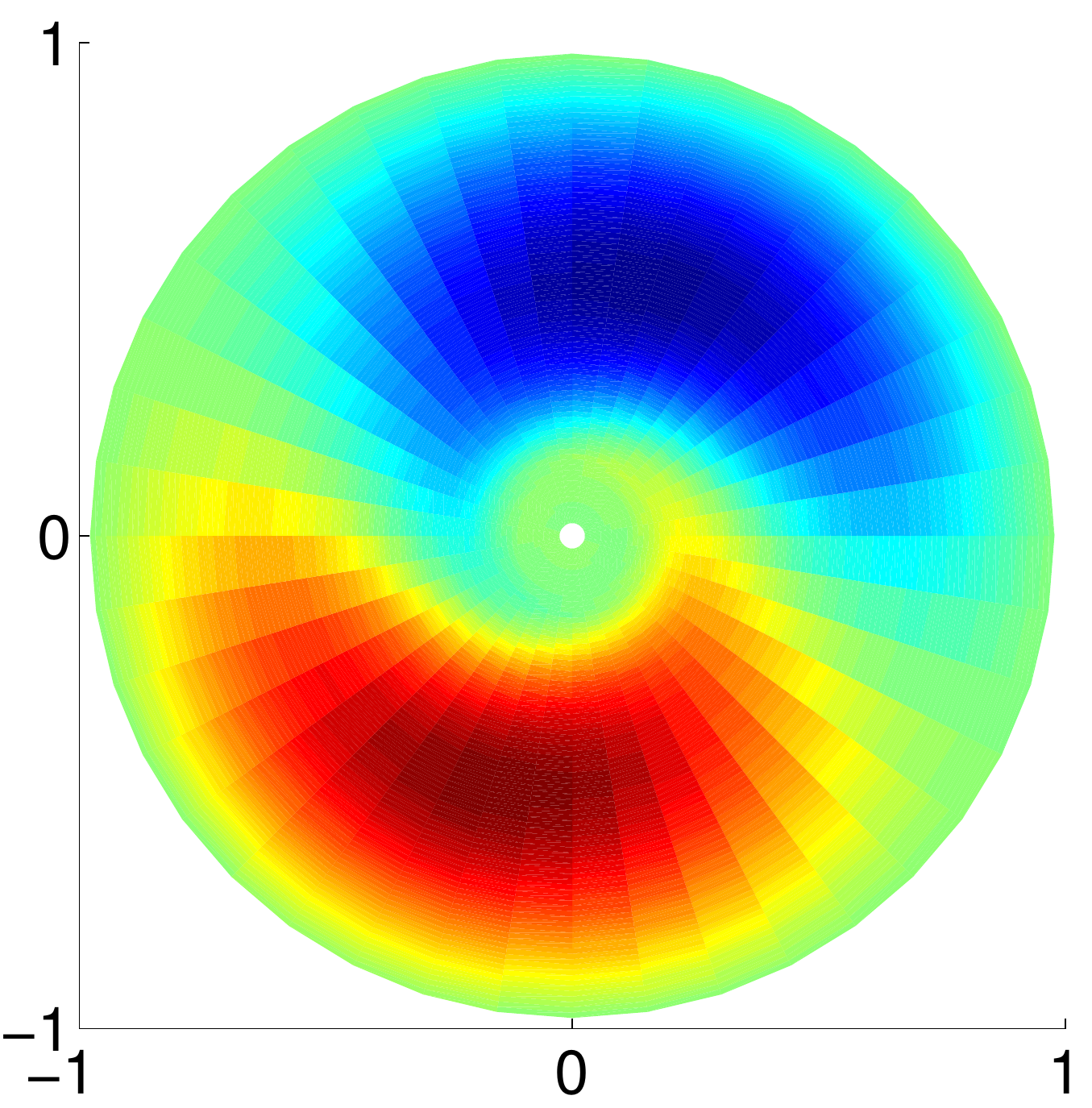}
  }
  \subfigure[$t = 20T$]{
    \includegraphics[width=1.1in]{./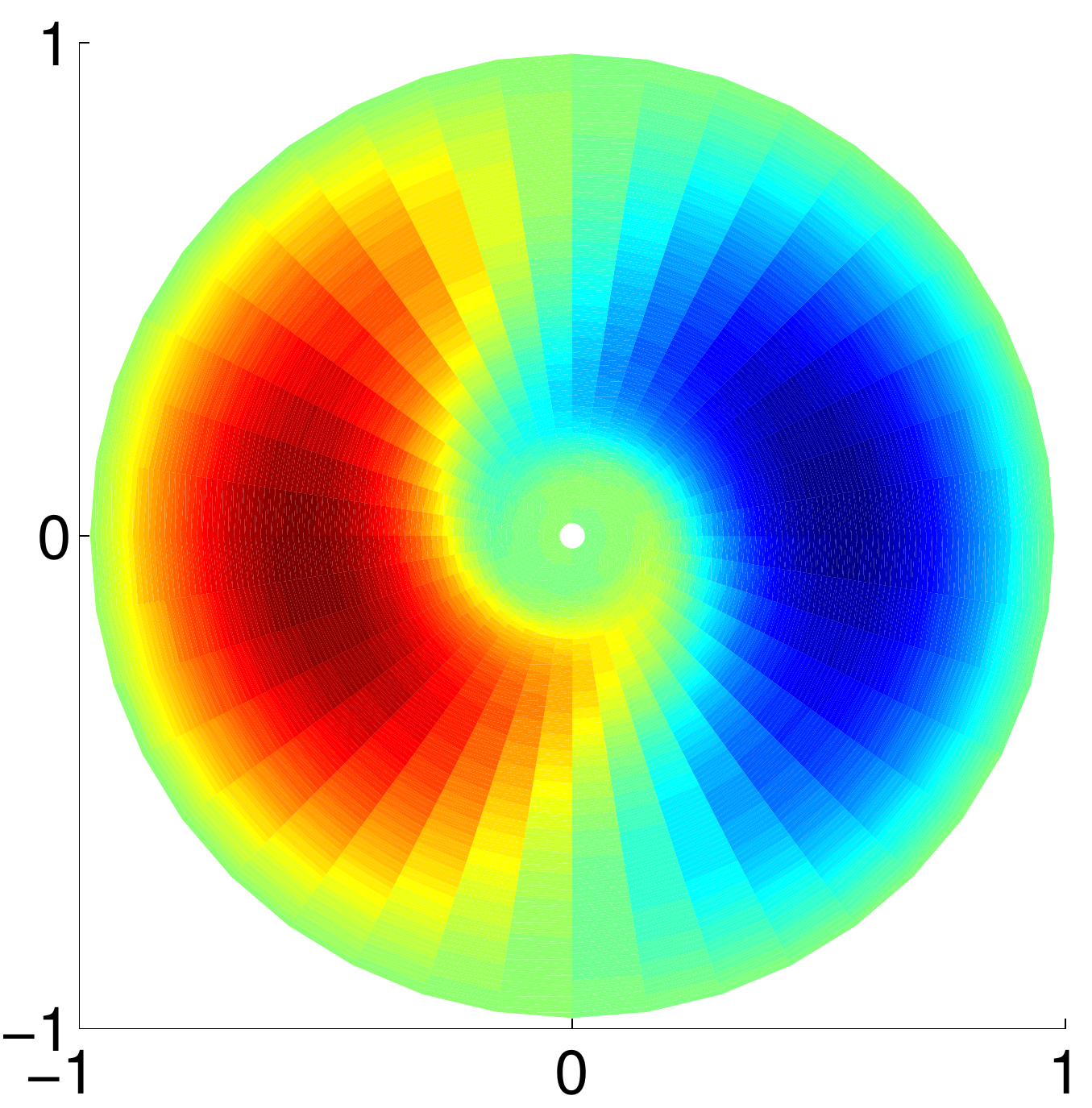}
  }
  \subfigure[$t = 30T$]{
    \includegraphics[width=1.1in]{./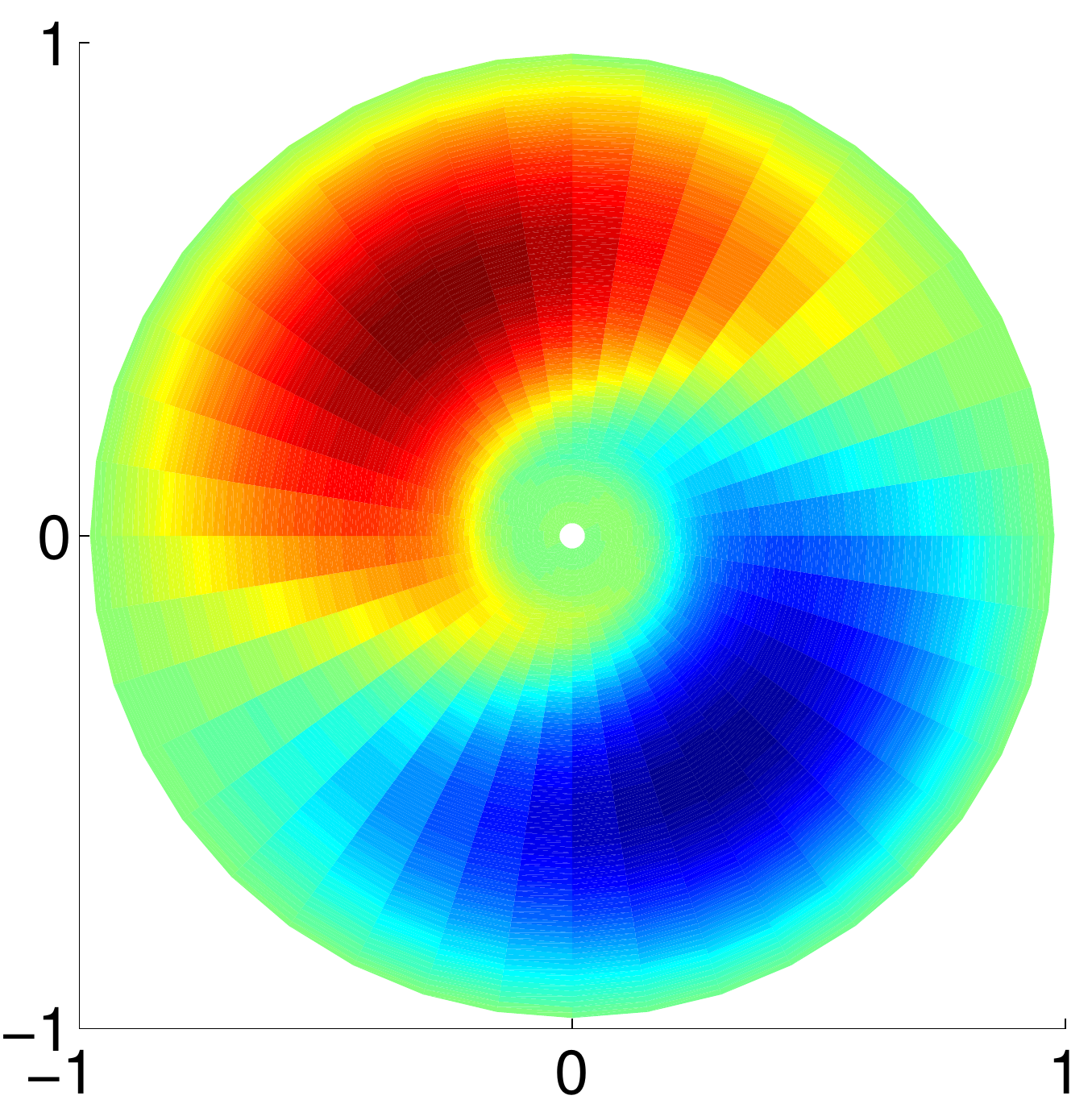}
  } 
  \caption[Full time-dependent and averaged dynamics of the tracer
  under the influence of the constant mean flow.]
  {Full time-dependent and averaged dynamics of the tracers under the
    influence of the constant mean flow shown for 1, 10, 20, and
    30 periods. The top row shows results for the full operator and bottom
    row for the averaged evolution.}
  \label{fig:meanF_Evolution1}
\end{figure}

Clearly, states as computed using full time-dependent operator and
averaged operator are almost indistinguishable. In fact, convergence
of the averaged solution to the true solution, defined by
\begin{equation}
  \label{eq:l2-norm_definition}
  ||v - v_{\mathrm{av}}|| = \left(\frac{\int_{\Omega}|v - v_{\mathrm{av}}|^2\,dx\,dy}
    {\int_{\Omega} |v|^2\,dx\,dy}\right)^{1/2},
\end{equation}
is as good as for mean-free field. For this this particular case is
$||v - v_{\mathrm{av}}|| \sim\varepsilon^{0.88}$ (see
Fig.~\ref{fig:meanF_converg})

\begin{figure}[htb]
  \centering
  \includegraphics[scale=.5]{./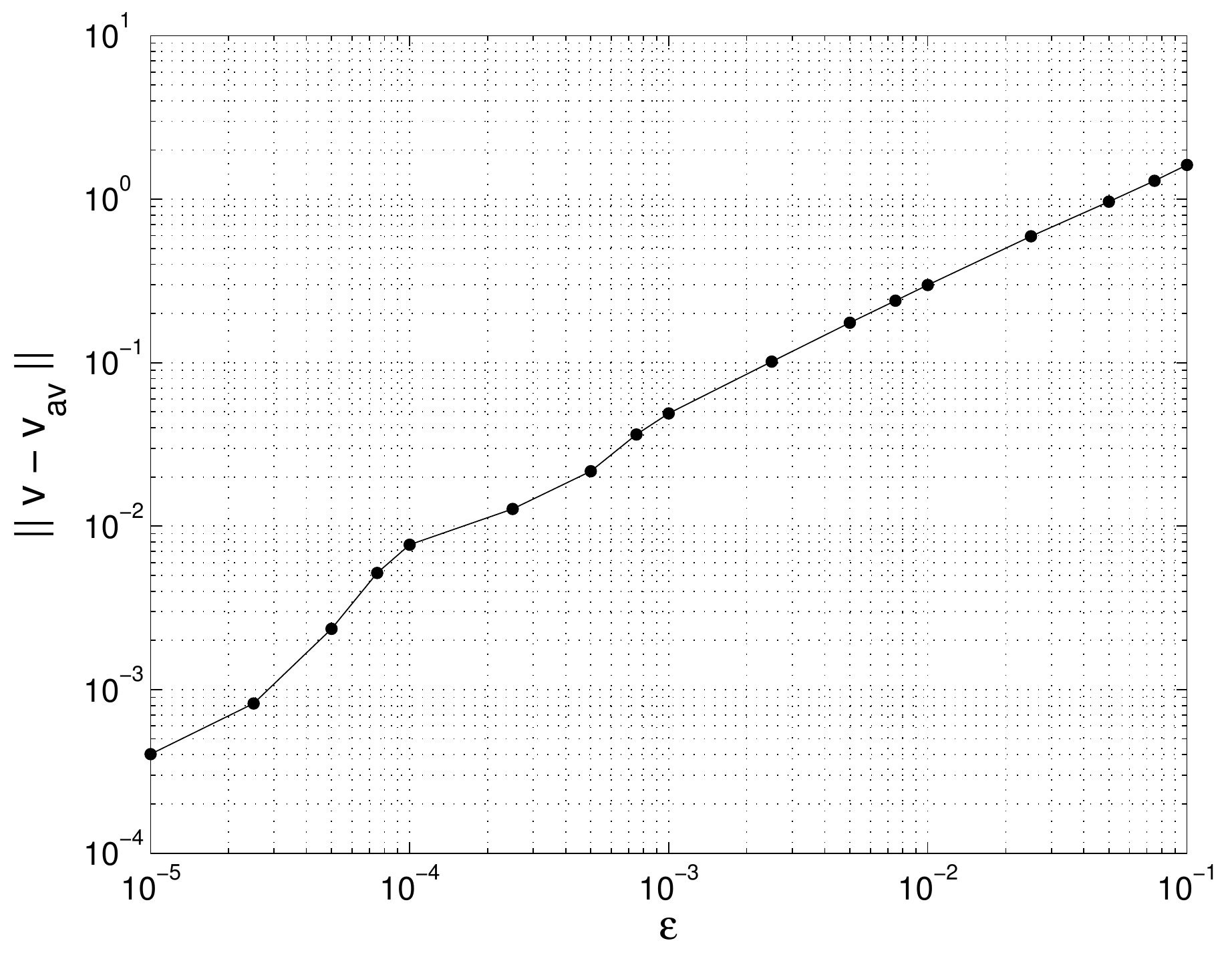}
  \caption[$L^2$ difference norm for the case of a flow with mean
  rotational component.]
  { $L^2$-norm of a difference between solutions to full and
    approximate equations as a function of parameter
    $\varepsilon$. Convergence rate is found to be
    $\sim\varepsilon^{0.88}$.}
  \label{fig:meanF_converg}
\end{figure}

We finally show evolution of the system for the case of very small
effective diffusivity $\varepsilon = 10^{-5}$. For such small value of
the parameter $\varepsilon$, tracer field does not diffuse trough the
boundary for a long time, and, therefore, large twists can be created
by the mean component of the circular flow. We illustrate it in the 
Fig.~\ref{fig:meanF_Evolution2}.

\begin{figure}[htb]
  \centering
  \subfigure[$t = 20T$]{
    \includegraphics[width=1.1in]{./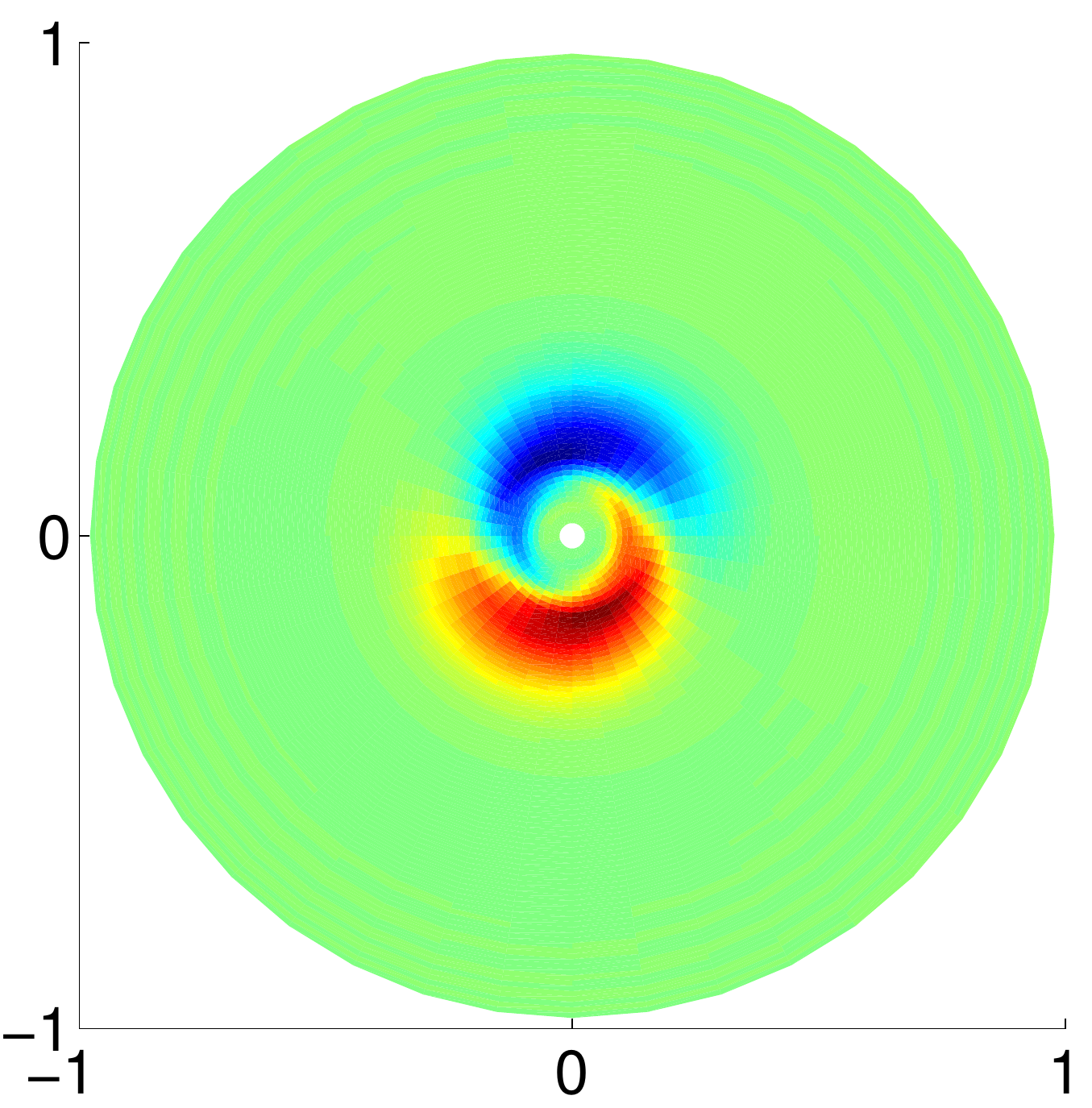}
  }
  \subfigure[$t = 100T$]{
    \includegraphics[width=1.1in]{./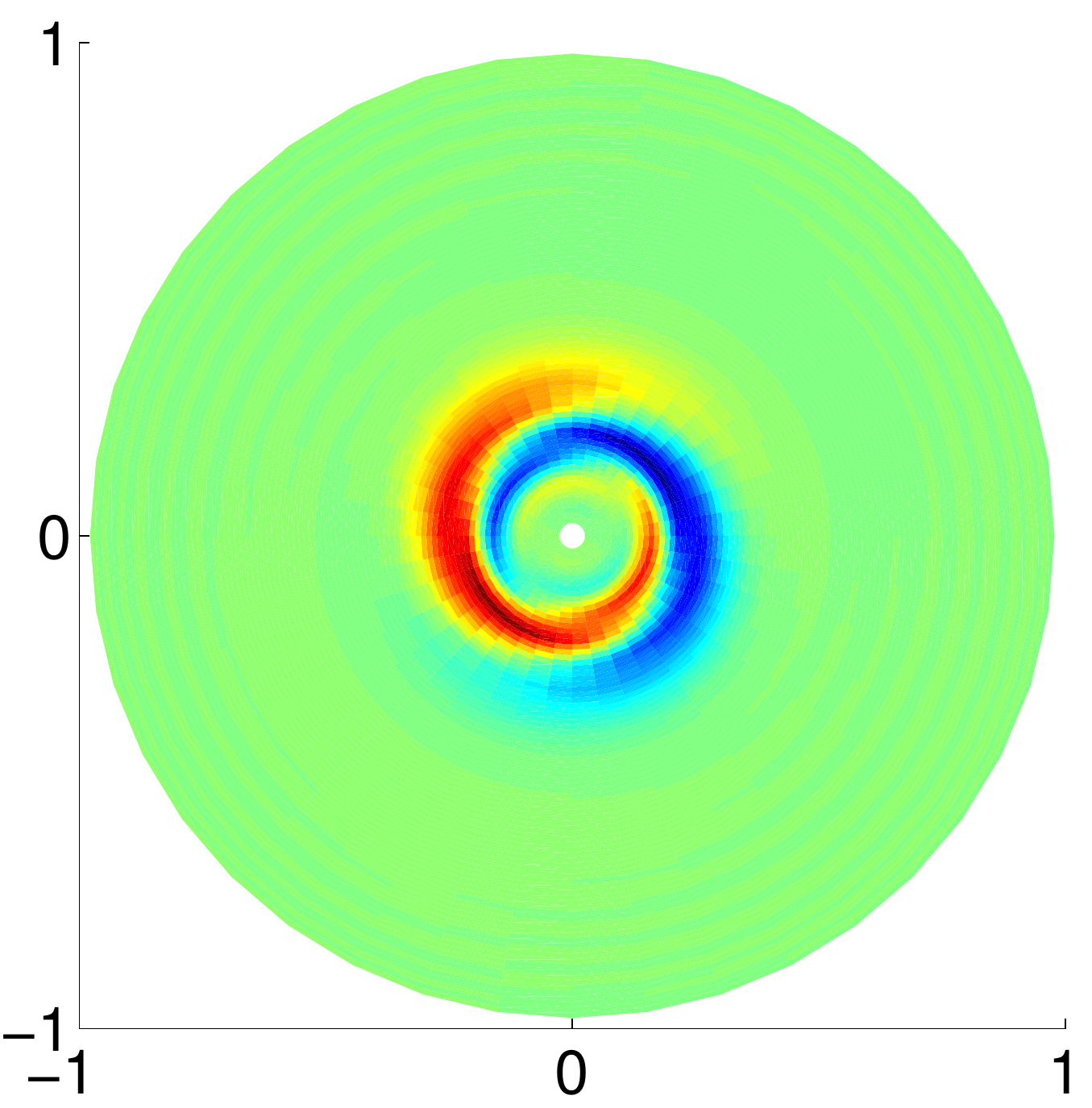}
  } 
  \subfigure[$t = 500T$]{
    \includegraphics[width=1.1in]{./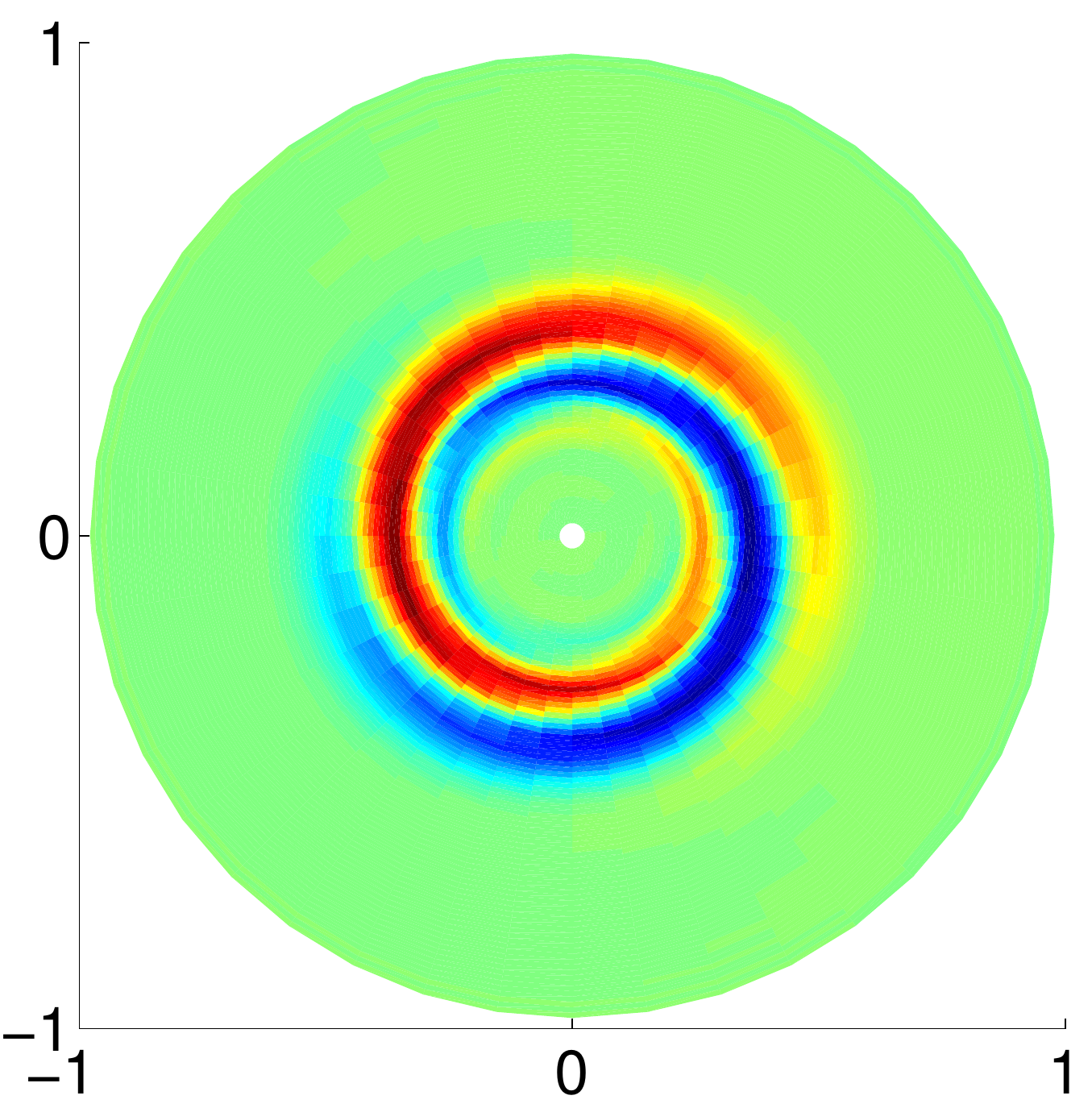}
  }
  \subfigure[$t = 1000T$]{
    \includegraphics[width=1.1in]{./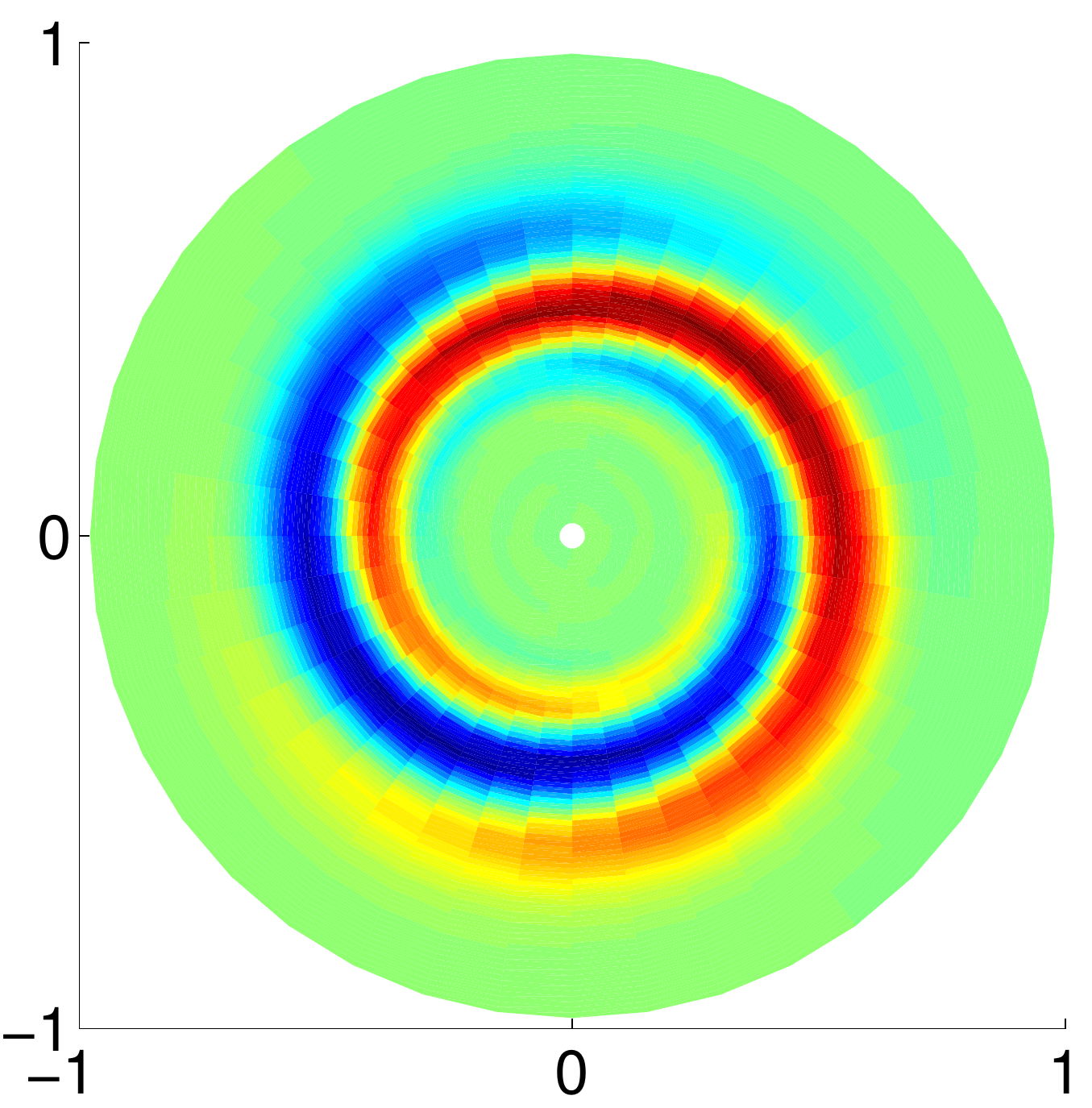}
  } 
  \caption[System evolution for small $\varepsilon$ and large times.]
  {Large-time evolution if the tracer field for the case of extremely
    small diffusivity $\varepsilon = 10^{-5}$.}
  \label{fig:meanF_Evolution2}
\end{figure}

\section{Conclusion}

In this work, we studied spectral properties of the two-dimensional advection-diffusion equation.
For a particular forms the underlying stream-function, we derive an averaged equation and, for
radial flows, we present a  characterization of the spectra of the averaged operator using complex-plane
WKBJ methods. In this way, we theoretically explain the nonlinear diffusive scaling for initial data which
are mean-free in the angle coordinate. Numerical comparison of the spectra of the full equation
and the averaged equation shows convergence of the spectra in the limit of vanishing diffusivity.

\section{Acknowledgments}

This work was partially supported by the following NSF grants: DMS-1009461, DMS-0807396, DMS-1108780, and
CNS-0855217 and the Office of Naval Research MURI OCEAN 3D + 1 grant N00014-11-1-0087.


\begin{thebibliography}{1}

\bibitem{agmon}
S. Agmon. 
\newblock {\em Lectures on Elliptic Boundary Value Problems}.
\newblock Van Nostrand, Princeton, 1965.

\bibitem{faierman}
M. Faierman. 
\newblock On the spectral theory of an elliptic boundary value problem involving an indefinite weight. 
\newblock {\em Operator Theory and Boundary Eigenvalue Problems}.
\newblock Edited by I. Gohberg and H. Langer, BirkhŠuser Verlag, Basel, 1995: 137--154.

\bibitem{batchelor}
Batchelor~G. K.
\newblock Small-scale variations of convected quantities like temperature in
  turbulent fluid.
\newblock {\em Journal of Fluid Mechanics.}, 5:113--133, 1959.

\bibitem{berestycki}
H. Berestycki, F. Hamel and N. Nadirashvili.
\newblock Elliptic eigenvalue problems with large drift and applications to nonlinear propagation phenomena.
\newblock {\em Comm. Math. Phys.}, 253:451--480, 2005.

\bibitem{const}
P. Constantin, A. Kiselev, L. Ryzhik, and A. Zlato\v s. 
\newblock Diffusion and Mixing in Fluid Flow.
\newblock {\em Annals of Math.} 168:643--674, 2008.

\bibitem{davies:2007}
 E.~B. Davies.
 \newblock {\em Linear Operators and Their Spectra}
 \newblock Cambridge University Press, New York, 2007.

\bibitem{homogenization2}
A. Fannjiang and G. Papanicolau. 
\newblock Convection enhanced diffusion for periodic flows. 
\newblock {\em SIAM Jour. Appl. Math.}, 54:333--408, 1994

\bibitem{fedoryuk}
M.~V. Fedoryuk
\newblock{\em Asymptotic analysis: Linear ordinary differential equations}.
\newblock Springer Verlag, Berlin Heidelberg, 1993

\bibitem{homogenization_thesis}
G.~A. Pavliotis.
\newblock {\em Homoenization theory for advection-diffusion equation with the
  mean flow}.
\newblock PhD thesis, Rensselaer Polytechnic Institute, Troy, New York, 2002.

\bibitem{homogenization1}
R.~M.~McLaughlin J.~Bonn.
\newblock Sensitive enhanced diffusivities for flows with fluctuating mean
  winds: A two-parameter study.
\newblock {\em Journal of Fluid Mechanics.}, 445:345 -- 375, 2001.

\bibitem{nayfeh}
A.H. Nayfeh.
\newblock {\em Perturbation methods}.
\newblock Wiley International, 1973.

\bibitem{freidlin1}
M. Freidlin and A. Wentzell.
\newblock {\em Random Perturbations of Dynamical Systems}. 
\newblock Springer-Verlag, Berlin Heidelberg, 3rd edition, 2012.

\bibitem{freidlin2}
M. Freidlin and A. Wentzell. 
\newblock {\em Random perturbations of Hamiltonian systems}.
\newblock Memoir AMS {\bf 109} 523, 1994.

\bibitem{freidlin3}
M. Freidlin and A. Wentzell. 
\newblock Diffusion Processes on Graphs and the Averaging Principle. 
\newblock {\em Ann. Prob.} 21:2215--2245, 1993.

\bibitem{freidlin4}
M. Freidlin and A. Wentzell. 
\newblock Averaging principle for quasi-linear parabolic PDEs and related diffusion processes.
\newblock{\em Stochastics and Dynamics}, 12 (01): 1150008, 20012. 

\bibitem{giona1}
M. Giona, S. Cerbelli, and V. Vitacolonna. 
\newblock Universality and imaginary potentials in advectionÐdiffusion equations in closed flows. 
\newblock {\em J. Fluid Mech.}, 513: 221--237, 2004.

\bibitem{giona2}
M. Giona, V. Vitacolonna, S. Cerbelli, and A. Adrover.
\newblock Advection diffusion in nonchaotic closed flows:
Non-Hermitian operators, universality, and localization.
\newblock {\em Phys. Rev. E}, 70:046224, 1--12, 2004. 

\bibitem{gleeson1}
J.~P. Gleeson. 
\newblock Transient micromixing: examples of laminar and chaotic stirring. 
\newblock {\em Physics of Fluids}, 17:100614, 2005.

\bibitem{gleeson2}
J.~P. Gleeson, J. West, O.~M. Roche, and A. Gelb.
\newblock Modelling annular micromixers. 
\newblock {\em SIAM J. Appl. Math.}, 64{\bf (4)}, 1294--1310, 2004.

\bibitem{koralov}
L. Koralov .
\newblock Random Perturbations of 2-dimensional Hamiltonian Flows.
{\em Probab. Theory and Related Fields}, 129: 37--62, 2004.

\bibitem{ts_lie-transform-optical}
S.~K.~Turitsyn I.~Gabitov, T.~Sch\"{a}fer.
\newblock Lie-transform averaging in nonlinear optical transmission systems
  with strong and rapid periodic dispersion variations.
\newblock {\em Phys. Lett. A}, 265:274--281, 2000.

\bibitem{lie-transform11}
Y.~Kodama.
\newblock Normal forms for weakly dispersive wave equations.
\newblock {\em Phys. Lett. A}, 112:193--196, 1985. 

\bibitem{shkalikov1}
S.~N. Tumanov and A.~A. Shkalikov.
\newblock On the limit behaviour of the spectrum of a model problem for the OrrÐSommerfeld equation with Poiseuille profile. 
\newblock {\em Izv. RAN. Ser. Mat.}, 66 {\bf(4)}:177--204, 2002

\bibitem{shkalikov2}
A.~A. Shkalikov.
\newblock Spectral Portraits of the OrrÐSommerfeld Operator with Large Reynolds Numbers. 
\newblock {\em J. Math. Sci.}, 124 {\bf(6)}: 5417--5441, 2004.



\bibitem{shkalikov3} 
V.~I. Pokotilo and A.~A. Shkalikov.
\newblock Semiclassical Approximation for a Nonself-Adjoint SturmÐLiouville Problem with a Parabolic Potential.
\newblock {\em Mat. Zametki}, 86, {\bf (3)}: 469--473, 2009.

\bibitem{ts_averaging}
Tobias Sch\"{a}fer, Andrew~C. Poje, and Jesenko Vukadinovic.
\newblock Averaged dynamics of time-periodic advection diffusion equations in
  the limit of small diffusivity.
\newblock {\em Physica D: Nonlinear Phenomena}, 238:233--240, 2009.

\bibitem{strouhal}
Frank~M. White.
\newblock {\em Fluid Mechanics}.
\newblock McGraw Hill, 4th edition, 1999.

\bibitem{arnold}
V.~I. Arnold.
\newblock {\em Mathematical Methods of Classical Mechanics}.
\newblock Springer, New York, 1989.

\bibitem{zlatos}
A. Zlato\v s.
\newblock Diffusion in fluid flow: Dissipation enhancement by flows in 2D.
\newblock{\em Comm. Partial Differential Equations}, 35:496--534, 2010.

\end{thebibliography}
\end{document}